\newcommand{\IM}{\mathrm{im}}
\newcommand{\QNUM}{D}
\newcommand{\Rank}{\mathrm{rank}}
\newcommand{\SPAN}{\mathrm{span}}
\newtheorem{Definition}{Definition}
\newtheorem{Lemma}{Lemma}
\newtheorem{Claim}{Claim}
\newtheorem{Corollary}{Corollary}
\newtheorem{Remark}{Remark}
\newtheorem{Notation}{Notation}
\newtheorem{Example}{Example}
\newtheorem{Problem}{Problem}
\newcommand{\LSS}{DTLSS\ }
\newcommand{\BLSS}{DTLSS\ }
\newcommand{\SLSS}{DTLSSs\ }
\newcommand{\GCR}{\textbf{GCR}}
\title{\LARGE \bf
   Model Reduction of Linear Switched Systems by Restricting Discrete Dynamics
}
\author{Mert Ba\c{s}tu\u{g}$^{1,2}$, Mih\'{a}ly Petreczky$^{2}$, Rafael Wisniewski$^{1}$ and John Leth$^{1}$
\thanks{$^{1}$Department of Electronic Systems, Automation and Control, Aalborg University, 9220 Aalborg, Denmark {\tt\small mertb@es.aau.dk, raf@es.aau.dk, jjl@es.aau.dk}}%
\thanks{$^{2}$Department of Computer Science and Automatic Control (UR Informatique et Automatique), \'{E}cole des Mines de Douai, 59508 Douai, France {\tt\small mihaly.petreczky@mines-douai.fr}}%
}
\begin{document}

\maketitle
\thispagestyle{empty}
\pagestyle{empty}

\begin{abstract}

We present a procedure for reducing the number of continuous states of
discrete-time linear switched systems, such that the reduced system has the same
behavior as the original system for a subset of switching sequences. The proposed
method is expected to be useful for abstraction based control synthesis methods
for hybrid systems.

\end{abstract}

\section{INTRODUCTION}

A discrete-time linear switched system \cite{liberzon2003,Sun:Book} (abbreviated by DTLSS) is a discrete-time hybrid system 
of the form
\begin{equation}
\label{dtlss1}
\Sigma\left\{
\begin{split}
  & x(t+1) = A_{\sigma(t)}x(t)+B_{\sigma(t)}u(t) \mbox{ and } x(0)=x_0\\
  & y(t) = C_{\sigma(t)}x(t),
\end{split}\right.
\end{equation}
where $x(t) \in \mathbb{R}^n$ is the continuous state, $y(t) \in \mathbb{R}^p$ the continuous output, 
$u(t) \in \mathbb{R}^{m}$ is the continuous input, $\sigma(t) \in Q=\{1,\ldots,\QNUM\}$, $\QNUM > 0$ is the discrete state (switching signal).
$A_q,B_q,C_q$ are matrices of suitable dimension for $q \in Q$. A more rigorous definition of \SLSS\ will be presented later on.
For the purposes of this paper, $u(t)$ and $\sigma(t)$ will be viewed as externally generated signals.

 \textbf{Contribution of the paper}
 Consider a discrete-time linear switched system $\Sigma$ of the form \eqref{dtlss1}, and a
 set $L$ which describes the admissible set of switching sequences. In this paper, we will present an
 algorithm for computing another \LSS
 \begin{equation}
 \label{dtlss2}
  \bar{\Sigma}\left\{
   \begin{split}
   &  \bar{x}(t+1) = \bar{A}_{\sigma(t)}\bar{x}(t)+\bar{B}_{\sigma(t)}u(t) \mbox{ and } \bar{x}(0)=\bar{x}_0\\
   &   \bar{y}(t) = \bar{C}_{\sigma(t)}\bar{x}(t) 
  \end{split}\right.
 \end{equation}
 such that for any switching sequence $\sigma(0)\cdots \sigma(t) = q_0 \cdots q_t \in L$ and continuous inputs $u(0),\ldots, u(t-1)$, 
 the output at time $t$ of \eqref{dtlss1} equals the output of 
 \eqref{dtlss2}, i.e., $y(t)=\bar{y}(t)$ and the number of state variables of \eqref{dtlss2} is smaller than that of \eqref{dtlss1}.
 In short, for any sequence of discrete states from $L$, the  input-output behavior of $\Sigma$ and $\bar{\Sigma}$ coincide and the
 size of $\bar{\Sigma}$ is smaller. 

\textbf{Motivation}
 Realistic plant models of industrial interest tend to be quite large and in general, the smaller is the plant model,
 the smaller is the resulting controller and the computational complexity of the control synthesis or verification algorithm. This is especially
 apparent for hybrid systems, since in this case, the computational complexity of control or verification algorithms is
 often exponential in the number of continuous states \cite{TabuadaBook}.
 The particular model reduction problem formulated in this paper was motivated by the observation that in many instances, we are interested in the 
 behavior of the model only for certain switching sequences. 
To illustrate this point, we will consider a number of simple scenarios where the results of the paper could potentially be useful. 

\emph{(1) Control and verification of \SLSS\  with switching constraints. }
  \SLSS\ with switching constraints occur naturally in a large number of applications. Such systems arise for example when the supervisory
  logic of the switching law is (partially) fixed. 
  Note that verification or control synthesis of \SLSS\ can be computationally demanding, especially if the properties or control objectives of interest are 
  discrete \cite{CalinBelta1}. The results of the paper could be useful for verification or control of 
  such systems, if the properties of interest or the control objectives depend only on the input-output behavior.
  In this case, we could replace the original \LSS\ $\Sigma$ by the reduced order \LSS $\bar{\Sigma}$ whose  input-output behavior 
  for all the admissible switching sequences coincides with that of $\Sigma$. We  can then perform verification or control synthesis for $\bar{\Sigma}$
  instead of $\Sigma$.  If $\bar{\Sigma}$ satisfies the desired input-output properties, then so does $\Sigma$.
  Likewise, if the composition of $\bar{\Sigma}$ with a controller meets the control objectives, then 
  the composition of this controller with $\Sigma$ meets them too. 

\emph{ (2) Piecewise-affine hybrid systems}.
 Consider a piecewise-linear hybrid system $H$ \cite{BempObs1,HybSys}. Such systems can often be modelled as a feedback interconnection of a linear
 switched system $\Sigma$ of the form \eqref{dtlss1} with a discrete event generator $\phi$, which generates the next discrete state based
 on the past discrete states and past outputs. As a consequence, the solutions of $H$ corresponds to the solutions  $\{q_t,x_t,u_t,y_t\}_{t=0}^{\infty}$ of \eqref{dtlss1} with
 $q_t=\phi(\{y_s,q_s\}_{s=0}^{t-1})$. A simple example of such a system is $q_t=\phi(y_{t-1})$, $t > 0$, and $q_0$ is fixed, where $\phi$ is a piecewise affine map.
 Often, it is desired to verify if the system is \emph{safe}, i.e., that the sequences of discrete modes generated by the system $H$ belong to a certain set of safe sequences $L$ for all (some) continuous input signals.
Consider now another piecewise-affine hybrid system $\bar{H}$ obtained by interconnecting the discrete event generator $\phi$ with  a reduced order \LSS 
$\bar{\Sigma}$, such that
the input-output behavior of $\bar{\Sigma}$ coincides with that of $\Sigma$ for all the switching sequences from $L$. 
 If $L$ is prefix closed, then $H$ is safe if and only if $\bar{H}$ is safe, and hence it is sufficient to perform safety analysis on 
 $\bar{H}$. Since the number of continuous-states of $\bar{H}$
 is smaller than that of $H$, it is easier to do verification for $\bar{H}$ than for the original model.
 Note that verification of piecewise-affine hybrid systems has high (in certain cases exponential) computational complexity, \cite{PHaver,YordanovBelta2}. 
Likewise, assume that it is desired to design a control law for $H$ which ensures  that the switching signal generated by 
the closed-loop system belongs to a certain prefix closed set $L$.  
Such problems arise in various settings for hybrid systems \cite{TabuadaBook}.
Again, this problem is solvable for $H$ if and only if it is solvable for $\bar{H}$, and the controller which solves this problem for $\bar{H}$ also solves it for $H$.

\textbf{Related work}
Results on realization theory of
linear switched systems with constrained switching appeared in \cite{MP:BigArticlePartI}. However, \cite{MP:BigArticlePartI} does not yield a model reduction algorithm,
see Remark \ref{remark:priorwork} for a detailed discussion. The algorithm presented in this paper bears a close resemblance to the moment matching method of
\cite{bastug_acc}, but its result and its scope of application are different. 
 The subject of model reduction for hybrid and switched systems was addressed
 in several papers \cite{French1,Zhang20082944,Mazzi1,Chahlaoui,Habets1,China2,China3,Lam1,Kotsalis2,Kotsalis1,6209392,Shaker1}. However, none of them deals with the problem addressed in this paper.

\textbf{Outline}
In Section \ref{sect:prelim}, we fix the notation and terminology of the paper.
In Section \ref{sect:lin_switch_def}, we present the formal definition and main properties of \SLSS. In Section \ref{sect:problem} we give the precise problem statement. In Section \ref{sect:markov}, 
we recall the concept of Markov parameters, and  
we present the fundamental theorem and corollaries which form the basis of the model reduction by moment matching procedure. 
The algorithm itself is stated in Section \ref{sect:alg} in detail.  
Finally, in Section \ref{sect:exam} the algorithm is illustrated on some numerical examples.

\section{Preliminaries: notation and terminology}
\label{sect:prelim}

Denote by $\mathbb{N}$ the set of natural numbers including $0$. 

Consider a non-empty set $Q$ which will be called the \emph{alphabet}. Denote by $Q^*$ the set of finite sequences of elements of $Q$. The elements of $Q^*$ are called \emph{strings} or \emph{words} over $Q$, and any set $L \subseteq Q^*$ is called a \emph{language} over $Q$. Each non-empty word $w$ is of the form $w=q_1q_2 \cdots q_k$ for some $q_1,q_2,\cdots,q_k \in Q$, $k>0$. In the following, if a word $w$ is stated as $w=q_1q_2 \cdots q_k$, it will be assumed that $q_1,q_2, \dots, q_k \in Q$. The element $q_i$ is called the \emph{$i$th letter of $w$}, for $i=1,2,\dots,k$ and $k$ is called the \emph{length} of $w$. The \emph{empty sequence (word)} is denoted by $\varepsilon$. The length of word $w$ is denoted by $|w|$; note that $|\varepsilon|=0$. The set of non-empty words is denoted by $Q^+$, i.e., $Q^+=Q^*\backslash\{\varepsilon\}$. The set of words of length $k \in \mathbb{N}$ is denoted by $Q^k$. The \emph{concatenation} of word $w \in Q^*$ with $v \in Q^*$ is denoted by $wv$:
if $v=v_{1}v_2 \cdots v_{k}$, and  $w=w_{1}w_2 \cdots w_{m}$, $k > 0, m >0$,
then $vw=v_{1}v_2 \cdots v_{k}w_{1}w_2 \cdots w_{m}$.  If $v=\epsilon$, then $wv=w$; if $w=\epsilon$, then $wv=v$.

If $Q$ has a finite number of elements, say $D$, it will be identified with its index set, that is $Q=\{1,2,\cdots,D\}$.

\section{LINEAR SWITCHED SYSTEMS}
\label{sect:lin_switch_def}

In this section, we present the formal definition of linear switched systems and
recall a number of relevant definitions. We follow the presentation of
\cite{MP:BigArticlePartI,petreczky2013}.
\begin{Definition}[DTLSS] \label{eq:LSSform}
A discrete-time linear switched system (DTLSS) is a tuple
\begin{equation} \label{key}
\Sigma=(p,m,n,Q,\{(A_q,B_q,C_q)|q \in Q\},x_0) 
\end{equation}
where $Q=\{1,\cdots,D\},~D>0,$ called the set of discrete modes, 
$A_q \in \mathbb{R}^{n \times n}$, $B_q \in \mathbb{R}^{n \times m}$, $C_q \in \mathbb{R}^{p \times n}$ are the matrices of the linear system in mode $q \in Q$, and $x_0$ is the initial state. The number $n$ is called the \emph{dimension (order) of $\Sigma$} and will sometimes be denoted by $\text{dim}\Sigma$.
\end{Definition}


\begin{Notation}
In the sequel, we use the following notation and terminology: 
The state space $X = \mathbb R^n$, the output space $Y = \mathbb R^p$, and
the input space $U = \mathbb R^m$. We will write $\overline{ U^+ \times Q^+ }=\{ (u,\sigma) \in U^+ \times Q^+ \mid |u|=|\sigma| \}$,
and $\sigma(t)$ for the $t+1$th element $q_t$ of a
sequence $\sigma=q_1q_2 \cdots q_{|\sigma|} \in Q^{+}$ (the same comment applies to the elements of $U^+$, $X^+$ and $Y^+$).
 \end{Notation} 

Throughout the paper, $\Sigma$ denotes a \LSS of the form \eqref{eq:LSSform}.

\begin{Definition}[Solution]
A \emph{solution} of the \BLSS 
$\Sigma$ at the initial state $x_0 \in X$ and relative to the pair
$(u,\sigma) \in \overline{ U^+ \times Q^+ }$ is a pair $(x, y) \in
X^+ \times Y^+$, $|x|=|\sigma|+1, |y|=|\sigma|$ satisfying
\begin{equation}\label{eq:cs}
\begin{split}
x(t+1) 
&= A_{\sigma(t)} x(t) + B_{\sigma(t)} u(t),~ x(0) = x_0 \\
y(t) &= C_{\sigma(t)} x(t), 
\end{split}
\end{equation}
for $t=0,1,\ldots,|\sigma|-1$.
\end{Definition}
We shall call $u$ the control
input, $\sigma$ the switching sequence, $x$ the state trajectory, and $y$ the
output trajectory. Note that the pair $(u,\sigma) \in \overline{ U^+ \times Q^+ }$ can be considered as an
input to the DTLSS.

\begin{Definition}[Input-state and input-output maps]
The \emph{input}-\emph{state} map $X_{\Sigma,x_0}$ and
\emph{input-output} map $Y_{\Sigma,x_0}$ for the \BLSS $\Sigma$,
induced by the initial state $x_0 \in X$, are the maps
\begin{align*}
\overline{ U^+ \times Q^+ } \to X^+;
(u,\sigma)\mapsto X_{\Sigma,x_0}(u,\sigma)=x,\\
\overline{ U^+ \times Q^+ } \to Y^+;~ (u,\sigma)\mapsto Y_{\Sigma,x_0}(u,\sigma)=y,
\end{align*}
where $(x,y)$ is the solution of $\Sigma$ at $x_0$ relative to
$(u,\sigma)$.
\end{Definition}

Next, we present the basic system theoretic concepts for \SLSS. 
The input-output behavior of a \LSS realization can be formalized as a map
\begin{equation} \label{eq:inputoutputmap}
f: \overline{ U^+ \times Q^+ } \rightarrow Y^+.
\end{equation}
The value $f(u,\sigma)$ represents the output of the underlying (black-box) system. This system may or may not admit a description by a DTLSS. Next, we define when a \LSS describes (realizes) a map of the form \eqref{eq:inputoutputmap}.

The \LSS $\Sigma$ of the form \eqref{eq:LSSform} is a \emph{realization} of an input-output map $f$ of the form \eqref{eq:inputoutputmap}, if $f$ is the input-output map of $\Sigma$ which corresponds to some initial state $x_0$, i.e., $f=Y_{\Sigma,x_0}$. The map $Y_{\Sigma,x_0}$ will be referred to as the \emph{input-output map of} $\Sigma$, and it will be denoted by $Y_\Sigma$. The following discussion is valid only for realizable input-output maps.

We say that the \SLSS $\Sigma_1$ and $\Sigma_2$ are \emph{equivalent} if $Y_{\Sigma_1}=Y_{\Sigma_2}$.
The \LSS $\Sigma_{\mathrm m}$ is said to be a \emph{minimal} realization of $f$, if 
$\Sigma_{\mathrm m}$ is a realization of $f$, and for any
\LSS $\Sigma$ such that $\Sigma$ is a realization of $f$,
$\dim \Sigma_{\mathrm m} \le \dim \Sigma$. In \cite{MP:BigArticlePartI}, it is stated that a DTLSS realization $\Sigma$ is minimal if and only if it is span-reachable and observable. See \cite{MP:BigArticlePartI} for detailed definitions of span-reachability and observability for LSSs.

\section{Model reduction by restricting the set of admissible sequences of discrete modes} \label{sect:problem}

 In this section, we state formally the problem of restricting the discrete 
 dynamics of the \BLSS.

\begin{Definition} \label{def:NDFA}
  A non-deterministic finite state automaton (NDFA) is a tuple
  $\mathcal{A}=(S,Q,\{\rightarrow_q\}_{q \in Q} ,F, s_0)$ such that 
  \begin{enumerate}
  \item $S$ is the finite state set,
  \item $F \subseteq S$ is the set of accepting (final) states,
  \item $\rightarrow_q \subseteq S \times S$ is the state transition relation
        labelled by $q \in Q$,
  \item $s_0 \in S$ is the initial state.
  \end{enumerate}
  For every $v \in Q^{*}$, define $\rightarrow_v$ inductively as follows:
  $\rightarrow_{\epsilon}=\{ (s,s) \mid s \in S \}$ and 
  $\rightarrow_{vq} = \{ (s_1,s_2) \in S \times S \mid \exists s_3 \in S: (s_1,s_3) \in \rightarrow_{v} \mbox{ and } (s_3,s_2) \in \rightarrow_q \}$ for all $q \in Q$. 
  We denote the fact $(s_1,s_2) \in \rightarrow_v$ by $s_1 \rightarrow_v s_2$. 
  The fact that there exists $s_2$ such that
  $s_1 \rightarrow_v s_2$ is denoted by $s_1 \rightarrow_v$. 
  Define the language $L(\mathcal{A})$ accepted by $\mathcal{A}$ as 
  \[ L(\mathcal{A})=\{ v \in Q^{*} \mid \exists s \in F: s_0 \rightarrow_v s \}. \] 
\end{Definition}
Recall that a language $L \subseteq Q^*$ is regular, if there exists an
  NDFA $\mathcal{A}$ such that $L=L(\mathcal{A})$. In this case, we say that $\mathcal{A}$ \emph{accepts} or \emph{generates} $L$.
  We say that $\mathcal{A}$ is \emph{co-reachable}, if from any state a final state can be reached, i.e., for any $s \in S$, there exists $v \in Q^*$ and 
 $s_f \in F$ such that $s \rightarrow_v s_f$. It is well-known that
 if $\mathcal{A}$ accepts $L$, then we can always compute an NDFA $\mathcal{A}_{co-r}$ from $\mathcal{A}$ such that $\mathcal{A}_{co-r}$ accepts $L$ and it is co-reachable. Hence, without loss of generality, in this paper we will consider only co-reachable NDFAs.
 \begin{Definition}[$L$-realization and $L$-equivalence]
 \label{def:lreal}
  Consider an input-output map $f$ and a \BLSS $\Sigma$. Let $L \subseteq Q^{+}$.  We will say that $\Sigma$ is an $L$-realization of $f$, if
 for every $u \in U^+$, and every $\sigma \in L$ such that $|u|=|\sigma|$,
  \begin{equation}
  \label{problem1:eq2}
     Y_{\Sigma}(u,\sigma)(|\sigma|-1)=f(u,\sigma)(|\sigma|-1),
 \end{equation}
i.e., the final value of $Y_{\Sigma}$ and $f$ agrees for all  $(u,\sigma) \in U^+ \times L$, $|\sigma|=|u|$. Note that a $Q^+$-realization is precisely a realization. We will say that two \BLSS $\Sigma_1$ and $\Sigma_2$ are $L$-equivalent, if
 $\Sigma_2$ is an $L$-realization of $Y_{\Sigma_1}$ (or equivalently if
  $\Sigma_1$ is an $L$-realization of $Y_{\Sigma_2}$).
 \end{Definition}
 \begin{Problem}[Model reduction preserving $L$-equivalence]
 \label{problem1}
  Consider a minimal \BLSS\ $\Sigma$ and let $L \subseteq Q^{+}$ be a regular language. Find a
  \BLSS $\bar{\Sigma}$ such that $\dim \bar{\Sigma} < \dim \Sigma$ and, 
  $\Sigma$ and $\bar{\Sigma}$ are $L$-equivalent.
 \end{Problem}
\begin{Remark}
\label{remark:priorwork} 
 The problem of finding an $L$-realization of
 $f$ was already addressed in \cite{MP:BigArticlePartI,MP:Phd} for the continuous time case. There, it was shown that
 if $\Sigma$ is a realization of $f$ and $M$ is a number which depends on the cardinality of the
 state-space of a deterministic finite state automaton accepting $L$,
 then it is possible to find a $\bar{\Sigma}$ such that
 $\bar{\Sigma}$ is an $L$-realization of $f$ and 
 \begin{equation} 
 \label{problem_form:est}
   \dim \bar{\Sigma} \le M \dim \Sigma. 
 \end{equation}
This result may also be extended for the discrete time case in a similar way. However, as \eqref{problem_form:est} shows, the obtained
$L$-realization can even be of higher dimension than the original system. 
\end{Remark}

\section{Model reduction algorithm: preliminaries} \label{sect:markov}
 In order to present the model reduction algorithm and its proof of correctness,
 we need to recall the following definitions from \cite{petrbako}.
 \begin{Definition}[Convolution representation]
 \label{sect:io:def1}
  The input-output map $f$ has a \emph{generalized convolution
  representation (abbreviated as \GCR)}, if there exist maps
  $S^f_0:Q^{+} \rightarrow \mathbb{R}^p$, 
  $S^f:Q^{+} \rightarrow \mathbb{R}^{p \times m}$, such that 
  $S^f(q)=0$ if $q \in Q$ and
  \begin{equation*}
    \begin{split}
     & f(u,\sigma)(t)=S_0^f(q_0q_1\cdots q_t) 
      + \sum_{k=0}^{t-1} S^f(q_kq_{k+1}\cdots q_t)u_k
    \end{split}
  \end{equation*}
  for all $(u,\sigma) \in \overline{ U^+ \times Q^+ }$, $t \leq |\sigma|$ with
  $\sigma=q_0q_1\cdots q_{|\sigma|} $.
 \end{Definition}
By a slight abuse of the terminology adopted in \cite{petrbako}, we will call the maps $\{S^f,S_0^f\}$ the \emph{Markov parameters} of $f$.
Notice that if $f$ has a \GCR, then the Markov-parameters of $f$ determine $f$ uniquely.
 In other words, the Markov-parameters of $f$ and $g$ are equal if and only if $f$ and $g$ are the same input-output map, i.e.
  $S_0^{f}=S_0^{g}$ and $S^{f}=S^{g}$ if and only if $f=g$.

In the sequel, we will use the fact that Markov parameters can be expressed via the matrices of a state-space representation. In order to
present this relationship, we introduce the following notation.
\begin{Notation} \label{Notation2}
Let $w=q_1q_2 \cdots q_k \in Q^*$, $k>0$ and $A_{q_i} \in \mathbb{R}^{n \times n}$, $i=1,\cdots,k$. Then the matrix $A_w$ is defined as
\begin{equation} 
A_w=A_{q_k}A_{q_{k-1}}\cdots A_{q_1}.
\end{equation}
If $w= \varepsilon$, then $A_\varepsilon$ is the identity matrix.
\end{Notation}
 \begin{Lemma}[\cite{petrbako}]
 \label{sect:io:lemma1}
  The map $f$ is  realized by the
  \LSS\  $\Sigma$ if and only if $f$ has a \GCR\ and 
  for all $v \in Q^{*}$, $q,q_0 \in Q$,
  \begin{equation}
  \label{sect:io:lemma1:eq1}
         S^f(q_0vq)=C_qA_vB_{q_0} \quad \mbox{ and } \quad S^{f}_{0}(vq)=C_qA_vx_0.
  \end{equation}
 \end{Lemma}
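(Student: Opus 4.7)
The plan is to prove both directions of the equivalence by direct computation via unrolling the state recursion of $\Sigma$ and then invoking the uniqueness of Markov parameters of a \GCR.

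First I would handle the forward direction. Assume $\Sigma$ realizes $f$, i.e., $f = Y_{\Sigma,x_0}$. Fix $(u,\sigma) \in \overline{U^+ \times Q^+}$ with $\sigma = q_0 q_1 \cdots q_{|\sigma|-1}$. Unrolling the state recursion in \eqref{eq:cs} yields, for $0 \le t \le |\sigma|-1$,
\begin{equation*}
x(t) = A_{q_{t-1}} A_{q_{t-2}} \cdots A_{q_0} x_0 + \sum_{k=0}^{t-1} A_{q_{t-1}} \cdots A_{q_{k+1}} B_{q_k} u(k).
\end{equation*}
Using Notation \ref{Notation2}, for $w = q_0 q_1 \cdots q_{t-1}$ we have $A_w = A_{q_{t-1}} \cdots A_{q_0}$, and similarly $A_{q_{k+1} \cdots q_{t-1}} = A_{q_{t-1}} \cdots A_{q_{k+1}}$. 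Multiplying by $C_{q_t}$ gives
\begin{equation*}
f(u,\sigma)(t) = C_{q_t} A_{q_0 \cdots q_{t-1}} x_0 + \sum_{k=0}^{t-1} C_{q_t} A_{q_{k+1} \cdots q_{t-1}} B_{q_k} u(k).
\end{equation*}
This exhibits $f$ as a \GCR in the sense of Definition \ref{sect:io:def1}, with the identifications $S_0^f(vq) = C_q A_v x_0$ (taking $v = q_0 \cdots q_{t-1}$ and $q = q_t$) and $S^f(q_0 v q) = C_q A_v B_{q_0}$ (taking $v = q_{k+1} \cdots q_{t-1}$). Finally, $S^f(q) = 0$ for single letters $q \in Q$ is imposed as a convention and is consistent because no term of the form $S^f(q_k \cdots q_t)$ with a single letter ever arises in the sum (the indices satisfy $k \le t-1$).

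For the converse direction I would argue by uniqueness of Markov parameters. Assume $f$ has a \GCR and its Markov parameters have the stated form. By the forward direction applied to $\Sigma$ itself, the input-output map $Y_\Sigma$ has a \GCR and Markov parameters
\begin{equation*}
S^{Y_\Sigma}(q_0 v q) = C_q A_v B_{q_0}, \qquad S_0^{Y_\Sigma}(vq) = C_q A_v x_0,
\end{equation*}
which, by hypothesis, coincide with those of $f$. The uniqueness statement recalled right after Definition \ref{sect:io:def1}, namely that the Markov parameters of a map with a \GCR determine the map, then gives $Y_\Sigma = f$, so $\Sigma$ realizes $f$.

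The main obstacle is notational book-keeping: the convention $A_w = A_{q_k} \cdots A_{q_1}$ for $w = q_1 \cdots q_k$ reverses the natural reading order, so one must be careful when aligning indices in the unrolled state formula with the Markov parameter formula, particularly the shift $v = q_{k+1} \cdots q_{t-1}$ for $S^f$ versus $v = q_0 \cdots q_{t-1}$ for $S^f_0$. Apart from this, the proof is a mechanical unrolling plus an appeal to uniqueness.
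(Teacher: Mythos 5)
Your proof is correct. The paper does not prove this lemma at all---it is quoted from \cite{petrbako}---and your argument (unrolling the state recursion to read off the Markov parameters for the forward direction, then invoking the uniqueness of Markov parameters of a \GCR\ for the converse) is exactly the standard argument behind the cited result; the index bookkeeping with $A_w=A_{q_k}\cdots A_{q_1}$, the consistency of the convention $S^f(q)=0$ on single letters, and the reduction of the converse to the forward direction are all handled correctly.
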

 \vspace{6pt}
We will extend Lemma \ref{sect:io:lemma1} to characterize the fact that $\Sigma$ is an $L$ realization of $f$ in terms of Markov parameters. To this end,
we need the following notation.
\begin{Notation}[Prefix and suffix of $L$]
\label{prefix:not}
Let the prefix $(L)_*$ and suffix ${}_*(L)$ of a language $L$ be defined as follows: $(L)_*=\{ s \in Q^{*} \mid \exists w \in Q^*: sw \in L\}$,
and ${}_*(L)=\{ s \in Q^{*} \mid \exists w \in Q^{*}: ws \in L\}$. In addition, let the $1$-prefix $(L)_1$ and $1$-suffix ${}_1(L)$ of a language $L$ be defined as follows: 
$(L)_1=\{ s \in Q^* \mid \exists q \in Q: sq\in L \}$ and ${}_1(L)=\{ s \in Q^* \mid \exists q \in Q: qs \in L \}$. A language $L$ is said to be prefix (resp. suffix) closed if $(L)_*=L$ (resp. ${}_*(L)=L$).
\end{Notation}

\begin{Example}
Let the language $L$ be defined as $L= (123)^*12 =\{12, 12312, 12312312, \dots  \}$. Then with the notation stated above, the following languages can be defined as follows:
\begin{equation*}
\begin{aligned}
& {}_*(L)=\{ \varepsilon, 2, 12, 312, 2312, 12312, \dots \} \\
& ({}_*(L))_1=\{ \varepsilon, 1, 31, 231, 1231, \dots \} \\
& (({}_*(L))_1)_*=\{ \varepsilon, 1, 2, 3, 12, 23, 31, 123, 231, 312, \dots \} \\
& (L)_1=\{ 1, 1231, 1231231, \dots \} \\
& ((L)_1)_*=\{ \varepsilon, 1, 12, 123, 1231, 12312, \dots \}.
\end{aligned}
\end{equation*}
\end{Example}
\vspace{6pt}
Note that if $L$ is regular, then so are $(L)_*$, ${}_*(L)$, $(L)_1$, ${}_1(L)$. Moreover 
NDFAs accepting these languages can easily be computed from an NDFA which accepts $L$.

The proof of Lemma \ref{sect:io:lemma1} can be extended to prove the
following result, which will be central for our further analysis.
\begin{Lemma}
 \label{sect:io:lemma2}
Assume $f$ has a GCR.
The \BLSS $\Sigma$ is an $L$-realization of $f$, if and only if for all
$v \in Q^{*}$, $q_0,q \in Q$ 
\begin{equation}
  \label{sect:io:lemma2:eq1}
 \begin{split}
  & vq \in L \implies S_0^f(vq)=C_qA_vx_0  \\
  & q_0vq \in {}_*(L) \implies S^f(q_0vq)=C_qA_vB_{q_0}. \\
 \end{split}
\end{equation}
\end{Lemma}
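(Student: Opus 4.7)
The plan is to adapt the proof of Lemma \ref{sect:io:lemma1} by restricting attention to switching sequences drawn from $L$ and carefully reading off which arguments of the Markov parameters are probed. The key observation is that being an $L$-realization only constrains the final output value $Y_\Sigma(u,\sigma)(|\sigma|-1)$ for $\sigma\in L$; expanding this value through the state recursion on one side and through the GCR on the other, the arguments of $S_0^f$ that appear are of the form $vq=\sigma\in L$, while the arguments of $S^f$ are of the form $q_0vq$, which is a suffix of $\sigma$ and hence lies in ${}_*(L)$. This already explains why the quantifier set on the right-hand side of \eqref{sect:io:lemma2:eq1} is $L$ for the $S_0^f$ identity and ${}_*(L)$ for the $S^f$ identity.

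For sufficiency, fix $\sigma=q_0q_1\cdots q_t\in L$ with $t=|\sigma|-1$ and any $u$ with $|u|=|\sigma|$. Unrolling the state equation \eqref{eq:cs} gives
\[ Y_\Sigma(u,\sigma)(t)=C_{q_t}A_{q_0\cdots q_{t-1}}x_0+\sum_{k=0}^{t-1}C_{q_t}A_{q_{k+1}\cdots q_{t-1}}B_{q_k}\,u(k). \]
Since $\sigma\in L$, the first hypothesis of \eqref{sect:io:lemma2:eq1} (with $v=q_0\cdots q_{t-1}$, $q=q_t$) identifies the first term with $S_0^f(\sigma)$. For each $k$, the suffix $q_kq_{k+1}\cdots q_t$ belongs to ${}_*(L)$ (witnessed by the prefix $q_0\cdots q_{k-1}$), so the second hypothesis identifies the $k$-th summand with $S^f(q_k\cdots q_t)\,u(k)$. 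Summing and applying the GCR formula for $f$ yields $Y_\Sigma(u,\sigma)(t)=f(u,\sigma)(t)$.

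For necessity, I exploit the fact that both $Y_\Sigma(\cdot,\sigma)(t)$ and $f(\cdot,\sigma)(t)$ are affine in $u$. Given $vq\in L$, set $\sigma=vq$ and $u\equiv 0$: the $L$-realization identity collapses to $C_qA_vx_0=S_0^f(vq)$. Given $q_0vq\in{}_*(L)$, pick a witness $w\in Q^*$ with $wq_0vq\in L$, set $\sigma=wq_0vq$, and let $j=|w|$ so that $q_0vq$ is exactly the suffix of $\sigma$ starting at position $j$. Subtracting the zero-input identity from the general one and letting $u(j)$ range over $\mathbb{R}^m$ while the remaining coordinates are held at zero forces the matching coefficient identity $C_qA_vB_{q_0}=S^f(q_0vq)$, exactly the linearity trick used in the proof of Lemma \ref{sect:io:lemma1}.

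The main obstacle is really just the language-theoretic bookkeeping: one must verify that every Markov-parameter argument arising in the expansion of $Y_\Sigma(u,\sigma)(|\sigma|-1)$ for $\sigma\in L$ lies in $L$ (for $S_0^f$) or in ${}_*(L)$ (for $S^f$), and conversely that every element of $L$ (respectively ${}_*(L)$) arises as such an argument for some $\sigma\in L$. Both facts follow immediately from the definition of ${}_*(L)$, and together they explain why ${}_*(L)$, in general strictly larger than $L$, is the correct quantifier set for the $S^f$ identity while $L$ suffices for $S_0^f$.
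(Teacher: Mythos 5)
Your proof is correct and is exactly the extension of the proof of Lemma \ref{sect:io:lemma1} that the paper itself invokes (the paper omits the details): unroll the state recursion, compare termwise with the GCR, and use affinity in $u$ to isolate the individual Markov parameters, with the bookkeeping that the $S_0^f$ argument is $\sigma\in L$ itself while the $S^f$ arguments range exactly over the length-$\ge 2$ suffixes of words in $L$, i.e.\ over ${}_*(L)$. No gaps; the witness construction in the necessity direction (choosing $\sigma=wq_0vq\in L$ and varying $u(|w|)$) is precisely what is needed.
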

\vspace{6pt}
Lemma \ref{sect:io:lemma1} implies the following important corollary.
\begin{Corollary}
 \label{sect:io:lemma2:col}
 $\Sigma$ is $L$-equivalent to $\bar{\Sigma}=(p,m,r,Q,\{(\bar{A}_q,\bar{B}_q,\bar{C}_q)|q \in Q\},\bar{x}_0)$   if and only if
\[
  \begin{split}
& (i)\mbox{ } \forall q \in Q, v \in Q^{*}:   vq \in L \implies  C_qA_vx_0 = \bar{C}_q\bar{A}_v\bar{x}_0. \\
& (ii)\mbox{ } \forall q,q_0 \in Q, v \in Q^{*}: q_0vq \in {}_*(L) \implies  C_qA_vB_{q_0} = \bar{C}_q\bar{A}_v\bar{B}_{q_0}. \\
  \end{split}
\]
\end{Corollary}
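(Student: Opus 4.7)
The plan is to derive the corollary as a direct consequence of Lemma \ref{sect:io:lemma1} applied to $\Sigma$ and Lemma \ref{sect:io:lemma2} applied to $\bar{\Sigma}$. The key observation is that $\Sigma$ always realizes its own input-output map $Y_\Sigma$, so the Markov parameters of $f := Y_\Sigma$ are completely determined by the matrices $(A_q, B_q, C_q, x_0)$ via Lemma \ref{sect:io:lemma1}. Then $L$-equivalence of $\Sigma$ and $\bar\Sigma$ is, by Definition \ref{def:lreal}, the statement that $\bar\Sigma$ is an $L$-realization of $f$, which Lemma \ref{sect:io:lemma2} characterizes purely in terms of the Markov parameters of $f$ and the matrices of $\bar\Sigma$.

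Concretely, I would proceed as follows. First, I would note that $f := Y_\Sigma$ admits a GCR (since it is realized by the DTLSS $\Sigma$, as guaranteed by Lemma \ref{sect:io:lemma1}), so Lemma \ref{sect:io:lemma2} is applicable with this $f$. Second, invoking Lemma \ref{sect:io:lemma1} on $\Sigma$, the Markov parameters of $f$ satisfy
\begin{equation*}
S_0^f(vq) = C_q A_v x_0, \qquad S^f(q_0 v q) = C_q A_v B_{q_0}
\end{equation*}
for all $v \in Q^*$ and $q_0, q \in Q$. Third, by Definition \ref{def:lreal}, $\Sigma$ and $\bar\Sigma$ are $L$-equivalent if and only if $\bar\Sigma$ is an $L$-realization of $f$. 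Applying Lemma \ref{sect:io:lemma2} to $\bar\Sigma$ (with its matrices $\bar A_q, \bar B_q, \bar C_q, \bar x_0$), this is equivalent to the implications
\begin{equation*}
vq \in L \implies S_0^f(vq) = \bar C_q \bar A_v \bar x_0,
\end{equation*}
\begin{equation*}
q_0 v q \in {}_*(L) \implies S^f(q_0 v q) = \bar C_q \bar A_v \bar B_{q_0},
\end{equation*}
for all $v \in Q^*$ and $q_0, q \in Q$. Substituting the expressions for $S_0^f$ and $S^f$ from the previous step yields exactly conditions (i) and (ii) of the corollary.

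Both directions of the equivalence follow at once: the forward direction by reading the substitutions left to right, and the converse by the same substitutions combined with Lemma \ref{sect:io:lemma2}'s characterization of $L$-realization. There is no real technical obstacle here; the only thing to be careful about is the role of the boundary condition $S^f(q) = 0$ for $q \in Q$ in Definition \ref{sect:io:def1}, which is already built into the statement of Lemma \ref{sect:io:lemma1} and hence poses no issue. The main conceptual content is simply to recognize that $L$-equivalence is by definition an $L$-realization condition on $\bar\Sigma$ with respect to the map generated by $\Sigma$, so that the two Lemmas chain together immediately.
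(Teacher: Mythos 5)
Your proposal is correct and is essentially the argument the paper intends (the paper states the corollary follows from Lemma \ref{sect:io:lemma1} without writing out the details): apply Lemma \ref{sect:io:lemma1} to $\Sigma$ to identify the Markov parameters of $f=Y_\Sigma$, then apply Lemma \ref{sect:io:lemma2} to $\bar{\Sigma}$ and substitute. Your handling of the GCR hypothesis and of the definition of $L$-equivalence as an $L$-realization condition is exactly right.
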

\vspace{6pt}
That is, in order to find a \BLSS $\bar{\Sigma}$ which is an $L$-equivalent realization of $\Sigma$, it is sufficient to find a \BLSS $\bar{\Sigma}$ which
satisfies parts (i) and (ii) of Corollary \ref{sect:io:lemma2:col}. Intuitively, the conditions (i) and (ii) mean that certain Markov parameters of the input-output maps of $\bar{\Sigma}$ match the corresponding Markov parameters of the input-output map of $\Sigma$. Note that $L$ need not be finite, and hence we might have to match an infinite number of Markov parameters. 
Relying on the intuition of \cite{bastug_acc}, the matching of the Markov parameters can be achieved either by restricting $\Sigma$
to the set of all states which are reachable along switching sequences from  $L$, or by eliminating those states which are not observable for
switching sequences from $L$. Remarkably, these two approaches are each other's dual.

Below we will formalize this intuition. This this end we use the following notation
\begin{Notation}
In the sequel, the image and kernel of a real matrix $M$ are denoted by $\IM (M)$ and $\ker (M)$ respectively. In addition, the $n \times n$ identity matrix is denoted by $I_n$.
\end{Notation}
\begin{Definition}[$L$-reachability space] \label{ReachabilityMat}
For a \LSS $\Sigma$ and $L \subseteq Q^{*}$, define the $L$-reachability space $\mathscr{R}_L(\Sigma)$ as
follows:
\begin{equation}
\begin{split}
& \mathscr{R}_L(\Sigma)= \SPAN \Big \{ \{ A_{v}x_0 \mid v \in Q^*, v \in ((L)_1)_* \} \cup \\
& \{ \IM (A_vB_{q_0}) \mid v \in Q^*, q_0 \in Q, q_0v \in ({}_*((L)_1))_* \} \Big \}.
\end{split}
\end{equation}
Whenever $\Sigma$ is clear from the context, we will denote $\mathscr{R}_L(\Sigma)$ by $\mathscr{R}_L$.  
\end{Definition}
Recall that according to Notation \ref{prefix:not} 
\begin{equation}
\label{ReachabilityMatLanguages}
\begin{split}
 & ((L)_1)_*= \{ s \in Q^* \mid \exists v \in Q^*, \hat{q} \in Q: sv \hat{q} \in L \}, \\
 & ({}_*((L)_1))_* = \{ s \in Q^* \mid \exists v_1,v_2 \in Q^*, \hat{q} \in Q: v_1s v_2 \hat{q}   \in L \}.
\end{split}
\end{equation}
Intuitively, the $L$-reachability space $\mathscr{R}_L(\Sigma)$ of a DTLSS $\Sigma$ is the space consisting of all the states $x \in X$ which are reachable from $x_0$ with some continous input and some switching sequence from $L$.
It follows from \cite{MP:BigArticlePartI,Sun:Book} that $\Sigma$ is span-reachable if and only if $\dim \mathscr{R}_{Q^*}=n$.
\begin{Definition}[$L$-unobservability subspace] \label{ObservabilityMat}
For a \LSS $\Sigma$, and $L \subseteq Q^*$, define the $L$-unobservability subspace as
\begin{equation} \label{eq:unobs}
\mathscr{O}_L(\Sigma)=\bigcap_{v \in Q^{*}, q \in Q, vq \in {}_*(L)} \ker (C_qA_{v}). 
\end{equation}
If $\Sigma$ is clear from the context, we will denote $\mathscr{O}_L(\Sigma)$ by $\mathscr{O}_L$.
\end{Definition}
Recall that according to Notation \ref{prefix:not}, 
 \begin{equation}
 \label{ObservabilityMatLanguages}
  {}_*(L)=\{ s \in Q^* \mid \exists v \in Q^{*}: vs \in L\}.
 \end{equation}
Intuitively, the $L$-unobservability space $\mathscr{O}_L(\Sigma)$  is the set of all those states which remain unobservable under switching sequences from $L$.

From \cite{Sun:Book}, it follows that  $\Sigma$ is observable if and only if $\mathscr{O}_{Q^*}=\{0\}$. Note that $L$-unobservability space is not defined in a totally ``symmetric'' way to the $L$-reachability space, i.e., subscript of the intersection sign in Equation~\eqref{eq:unobs} is not $vq \in {}_*(_1({}_*(L)))$. See Remarks \ref{rem:lemma3} and \ref{rem:lemma4} for further discussion.

We are now ready to present two results which are central to the model reduction algorithm to be presented in the 
next section.
\begin{Lemma} \label{theo:mert1}
Let $\Sigma=(p,m,n,Q,\{(A_q,B_q,C_q)|q \in Q\},x_0)$ be a \LSS\ and $L \subseteq Q^*$. Let $\dim \mathscr{R}_{L}(\Sigma) =r$ and $P \in \mathbb{R}^{n \times r}$ be a full column rank matrix such that
\[
\mathscr{R}_{L}(\Sigma) = \IM (P).
\]
Let $\bar{\Sigma}=(p,m,r,Q,\{(\bar{A}_q,\bar{B}_q,\bar{C}_q)|q \in Q\},\bar{x}_0)$ be the \LSS defined by
\[
\bar{A}_q=P^{-1}A_qP \mbox{, } \bar{B}_q=P^{-1}B_q \mbox{, } \bar{C}_q=C_qP \mbox{, } \bar{x}_0=P^{-1}x_0,
\]
where $P^{-1}$ is a left inverse of $P$. Then 
$\bar{\Sigma}$ and $\Sigma$ are $L$-equivalent.
\end{Lemma}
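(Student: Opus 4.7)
The plan is to invoke Corollary~\ref{sect:io:lemma2:col} and check its two conditions separately; both will reduce to the assertion that the matrix $P$ intertwines the dynamics of $\Sigma$ with those of $\bar{\Sigma}$ along the appropriate words. Concretely, since $\bar{C}_q = C_q P$, it suffices to establish $P\bar{A}_v \bar{x}_0 = A_v x_0$ for every $v$ with $vq \in L$ (for condition (i)), and $P\bar{A}_v \bar{B}_{q_0} = A_v B_{q_0}$ for every $v$ with $q_0 v q \in {}_*(L)$ (for condition (ii)); multiplying both sides on the left by $C_q$ then produces the required identities.

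The single algebraic fact underlying everything is that, since $P^{-1}P = I_r$, one has $P\bar{A}_q = PP^{-1}A_q P$, which collapses to $A_q P$ precisely on vectors whose image already lies in $\IM(P) = \mathscr{R}_L$; analogous statements hold for $\bar{B}_q$ and $\bar{x}_0$. So I would fix $vq \in L$, write $v = q_1 q_2 \cdots q_k$, and prove by induction on $j \in \{0,1,\ldots,k\}$ that
\[
  P\bar{A}_{q_1 \cdots q_j} \bar{x}_0 \;=\; A_{q_1 \cdots q_j} x_0.
\]
The base case $j=0$ reduces to $x_0 \in \IM(P)$, which follows from $\varepsilon \in ((L)_1)_*$ (witnessed by $vq$ itself). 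For the inductive step, factoring $\bar{A}_{q_1 \cdots q_{j+1}} = \bar{A}_{q_{j+1}} \bar{A}_{q_1 \cdots q_j}$ and applying the inductive hypothesis reduces the equality to the containment $A_{q_1 \cdots q_{j+1}} x_0 \in \IM(P)$; and this holds because $q_1 \cdots q_{j+1}$ is a prefix of $vq$, hence lies in $(L)_1 \subseteq ((L)_1)_*$, so the vector belongs to $\mathscr{R}_L$ by Definition~\ref{ReachabilityMat}. Taking $j = k$ and multiplying by $C_q$ on the left yields (i).

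Condition (ii) is handled by the symmetric induction: given $q_0 v q \in {}_*(L)$, fix $w \in Q^*$ with $w q_0 v q \in L$ and $v = q_1 \cdots q_k$, and show $P\bar{A}_{q_1 \cdots q_j} \bar{B}_{q_0} = A_{q_1 \cdots q_j} B_{q_0}$ by the same induction. The base case uses $q_0 \in ({}_*((L)_1))_*$ to give $\IM(B_{q_0}) \subseteq \mathscr{R}_L$, and the inductive step uses $q_0 q_1 \cdots q_{j+1} \in ({}_*((L)_1))_*$; both memberships are witnessed by the single word $w q_0 v q$, suitably split according to~\eqref{ReachabilityMatLanguages}. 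I expect the only delicate aspect to be precisely this language-theoretic bookkeeping: verifying that every prefix encountered along the induction sits inside the iterated prefix/suffix closures appearing in the definition of $\mathscr{R}_L$. This is exactly what forces Definition~\ref{ReachabilityMat} to use $((L)_1)_*$ and $({}_*((L)_1))_*$ rather than $L$ itself; once these witnesses are recorded, both inductions and the final appeal to Corollary~\ref{sect:io:lemma2:col} are mechanical.
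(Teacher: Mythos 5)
Your proposal is correct and follows essentially the same route as the paper: it reduces the statement to Corollary~\ref{sect:io:lemma2:col} and proves, by induction on the length of $v$, that $P\bar{A}_v\bar{x}_0=A_vx_0$ and $P\bar{A}_v\bar{B}_{q_0}=A_vB_{q_0}$, using at each step that the relevant vector or image lies in $\IM(P)=\mathscr{R}_L$ because the corresponding prefix belongs to $((L)_1)_*$ or $({}_*((L)_1))_*$ — which is exactly the content of the paper's Claim~\ref{claim1}. Your explicit tracking of the witnessing word $wq_0vq$ is just a more concrete phrasing of the paper's appeal to prefix-closedness of these languages.
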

That is, Lemma \ref{theo:mert1} says that if we find a matrix representation of the $L$-reachability space, then we
can compute a reduced order \BLSS which is an $L$-realization of $\Sigma$. 

Before presenting the proof of Lemma~\ref{theo:mert1}, we will prove the following claim.

\begin{Claim} \label{claim1}
With the conditions of Lemma~\ref{theo:mert1} the following holds: 

\emph{(i)} For all $v \in Q^*$ such that $v \in ((L)_1)_*$,
\[
\begin{split}
& v=\varepsilon \implies PP^{-1}x_{0}=x_{0}, \\
& v=q_1 \cdots q_k, \mbox{ } k \geq 1 \implies \\
& PP^{-1}A_{q_k} \cdots PP^{-1}A_{q_1}PP^{-1}x_{0}= A_{q_k} \cdots A_{q_1}x_{0}. 
\end{split}
\]

\emph{(ii)} For all $v \in Q^*$, $q_0 \in Q$ such that $q_0v \in (({}_*(L))_1)_*$,
\[
\begin{split}
& v=\varepsilon \implies PP^{-1}B_{q_0}=B_{q_0}, \\
& v=q_1 \cdots q_k, \mbox{ } k \geq 1 \implies \\ 
& PP^{-1}A_{q_k} \cdots PP^{-1}A_{q_1}PP^{-1}B_{q_0}= A_{q_k} \cdots A_{q_1}B_{q_0}.
\end{split}
\]
\end{Claim}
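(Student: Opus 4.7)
The key linear-algebraic fact is that $PP^{-1}$ acts as the identity on $\IM(P)=\mathscr{R}_{L}(\Sigma)$: since $P^{-1}P=I_{r}$, for any $y\in\IM(P)$ we have $y=Pz$ and thus $PP^{-1}y=PP^{-1}Pz=Pz=y$; applied column-wise, $PP^{-1}M=M$ whenever $\IM(M)\subseteq\IM(P)$. So the plan is to read each of the two displayed products from right to left and argue that after each prefix of multiplications the intermediate vector (or matrix) still lies in $\mathscr{R}_{L}(\Sigma)$, which lets us cancel every occurrence of $PP^{-1}$.

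Before starting the induction I would establish a \emph{prefix-closure} property of the index sets appearing in Definition~\ref{ReachabilityMat}. Unfolding the definitions in Notation~\ref{prefix:not}, $v\in ((L)_{1})_{*}$ means that there exist $w\in Q^{*}$ and $\hat q\in Q$ with $vw\hat q\in L$, and $q_{0}v\in (({}_{*}(L))_{1})_{*}$ means that there exist $u,w\in Q^{*}$ and $\hat q\in Q$ with $uq_{0}vw\hat q\in L$. In either case, if $v=q_{1}\cdots q_{k}$, then for every prefix $v_{j}=q_{1}\cdots q_{j}$ with $0\le j\le k$ we can lengthen $w$ to $q_{j+1}\cdots q_{k}w$ and obtain the corresponding word of $L$. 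Hence every prefix of $v$ remains in $((L)_{1})_{*}$ in case (i), and every $q_{0}q_{1}\cdots q_{j}$ remains in $(({}_{*}(L))_{1})_{*}$ in case (ii). I will also note the identity ${}_{*}((L)_{1})=({}_{*}(L))_{1}$, so that the index set in the claim coincides with that of Definition~\ref{ReachabilityMat}.

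With prefix closure in hand, the definition of $\mathscr{R}_{L}(\Sigma)$ immediately gives, for part (i), $A_{q_{1}\cdots q_{j}}x_{0}=A_{q_{j}}\cdots A_{q_{1}}x_{0}\in\mathscr{R}_{L}(\Sigma)$ for every $0\le j\le k$; for part (ii), $\IM(A_{q_{1}\cdots q_{j}}B_{q_{0}})=\IM(A_{q_{j}}\cdots A_{q_{1}}B_{q_{0}})\subseteq\mathscr{R}_{L}(\Sigma)$ for every $0\le j\le k$. Now I would do a short induction on $j$: the base case $PP^{-1}x_{0}=x_{0}$ (resp. $PP^{-1}B_{q_{0}}=B_{q_{0}}$) uses $\varepsilon\in((L)_{1})_{*}$ (resp. $q_{0}\in(({}_{*}(L))_{1})_{*}$), which are instances of the hypothesis. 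At step $j$, I apply the inductive hypothesis to strip all $PP^{-1}$ factors inside, reducing the expression to $A_{q_{j}}\cdots A_{q_{1}}x_{0}$ (or $A_{q_{j}}\cdots A_{q_{1}}B_{q_{0}}$); this lies in $\mathscr{R}_{L}(\Sigma)=\IM(P)$, so multiplying by $A_{q_{j+1}}$ and then by $PP^{-1}$ gives $A_{q_{j+1}}\cdots A_{q_{1}}x_{0}$, completing the step.

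The only real content is the prefix-closure bookkeeping for the two languages $((L)_{1})_{*}$ and $(({}_{*}(L))_{1})_{*}$; after that, the induction is mechanical and the displayed identities follow. I expect no genuine obstacle, only careful reading of the nested prefix/suffix operators so that the intermediate vectors are indeed covered by the generators used to define $\mathscr{R}_{L}(\Sigma)$.
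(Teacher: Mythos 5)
Your proof is correct and follows essentially the same route as the paper's: induction on the length of $v$, using the prefix-closedness of $((L)_1)_*$ and $(({}_*(L))_1)_*$ so that each partial product $A_{q_j}\cdots A_{q_1}x_0$ (resp.\ $A_{q_j}\cdots A_{q_1}B_{q_0}$) lies in $\mathscr{R}_L(\Sigma)=\IM(P)$, on which $PP^{-1}$ acts as the identity because $P^{-1}P=I_r$. Your explicit check that ${}_*((L)_1)=({}_*(L))_1$ (reconciling the index set of the claim with that of Definition~\ref{ReachabilityMat}) is a useful detail the paper leaves implicit, but the argument is otherwise the same.
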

\begin{proof}\emph{(Claim~\ref{claim1} (ii))}
The proof is by induction on the length of $v$. For $|v|=0$, let $q_0 \in Q$ and $q_0v \in(({}_*(L))_1)_*$. The assumption $\mathscr{R}_L(\Sigma)= \IM (P)$ in Lemma~\ref{theo:mert1} implies $\IM (B_{q_0}) \subseteq \IM (P)$. Hence, there exists an $\Lambda \in \mathbb{R}^{r \times m}$ such that $P \Lambda=B_{q_0}$, and therefore $PP^{-1}B_{q_0}=PP^{-1}P \Lambda= P \Lambda=B_{q_0}$.


For $|v|=k \geq 1$, $k \in \mathbb{N}$, let $v=q_1 \cdots q_k$, $q_0v \in (({}_*(L))_1)_*$. Observe that if $q_0v \in (({}_*(L))_1)_*$ then also $q_0\hat{v} \in (({}_*(L))_1)_*$ where $\hat{v}=q_1 \cdots q_{k-1}$, since the set $(({}_*(L))_1)_*$ is prefix closed. Assume the claim holds for $|v|=k-1$, i.e., for $v=q_1 \cdots q_{k-1}$. Then
\begin{multline*}
PP^{-1}A_{q_k}PP^{-1}A_{q_{k-1}} \cdots PP^{-1}A_{q_1}PP^{-1}B_{q_0}= \\ 
PP^{-1}A_{q_k}A_{q_{k-1}} \cdots A_{q_1}B_{q_0}.
\end{multline*}
Since again $\IM (A_{q_k} \cdots A_{q_1}B_{q_0}) \subseteq \IM (P)$, it follows that
\[
PP^{-1}A_{q_k}A_{q_{k-1}} \cdots A_{q_1}B_{q_0}=A_{q_k}A_{q_{k-1}} \cdots A_{q_1}B_{q_0},
\]
proving this part. The proof of part \emph{(i)} is similar.
\end{proof}

\begin{proof}[Proof of Lemma \ref{theo:mert1}] 
We will show that part (i) and (ii) of Corollary \ref{sect:io:lemma2:col} hold.

\emph{(ii)}. Using Claim~\ref{claim1} \emph{(ii)} and observing $({}_*(L))_1 \subseteq (({}_*(L))_1)_*$, it follows that, for all $v \in Q^*$, $q_0,q \in Q$ such that $q_0vq \in {}_*(L)$
\[
\begin{split}
& v=\varepsilon \implies \bar{C}_q\bar{B}_{q_0}= C_q PP^{-1}B_{q_0}=C_qB_{q_0}, \\
& v=q_1 \cdots q_k, \mbox{ } k \geq 1 \implies \\ 
& \bar{C}_q\bar{A}_v\bar{B}_{q_0}= C_qPP^{-1}A_{q_k}PP^{-1}A_{q_{k-1}} \cdots PP^{-1}A_{q_1}PP^{-1}B_{q_0}= \\
& C_qA_{q_k} A_{q_{k-1}} \cdots A_{q_1}B_{q_0}.
\end{split}
\]

\emph{(i)} Similar to part \emph{(ii)}.
\end{proof}
By similar arguments we also obtain:
\begin{Lemma} \label{theo:mert2}
Let $\Sigma=(p,m,n,Q,\{(A_q,B_q,C_q)|q \in Q\},x_0)$ be a \LSS and let $L \subseteq Q^*$. Let $\mathrm{codim} \mathscr{O}_L({\Sigma)}=r$ and $W \in \mathbb{R}^{r \times n}$ be a full row rank matrix such that
\[
   \mathscr{O}_{L}(\Sigma) = \ker (W).
\]
Let $\bar{\Sigma}=(p,m,r,Q,\{(\bar{A}_q,\bar{B}_q,\bar{C}_q)|q \in Q\},\bar{x}_0)$ be the \LSS defined by
\[
\bar{A}_q=WA_qW^{-1} \mbox{, } \bar{B}_q=WB_q \mbox{, } \bar{C}_q=C_qW^{-1} \mbox{, } \bar{x}_0=Wx_0.
\]
where $W^{-1}$ is a right inverse of $W$. Then $\bar{\Sigma}$ is an $L$-equivalent to $\Sigma$. 
\end{Lemma}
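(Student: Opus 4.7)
The plan is to mirror the proof of Lemma \ref{theo:mert1}, with the roles of reachability and observability (and hence of matrix products acting on the right versus on the left) interchanged. By Corollary \ref{sect:io:lemma2:col}, it is enough to verify (i) $\bar{C}_q \bar{A}_v \bar{x}_0 = C_q A_v x_0$ for all $vq \in L$, and (ii) $\bar{C}_q \bar{A}_v \bar{B}_{q_0} = C_q A_v B_{q_0}$ for all $q_0 v q \in {}_*(L)$.

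The central observation is that $\Pi := W^{-1} W$ is idempotent (since $W W^{-1} = I_r$) with $\ker(\Pi) = \ker(W) = \mathscr{O}_L$, so $\mathrm{im}(I - \Pi) \subseteq \mathscr{O}_L$. Combined with the defining property of $\mathscr{O}_L$, this yields the key identity
\[
C_q A_v (I - \Pi) = 0 \quad \text{whenever} \quad vq \in {}_*(L),
\]
i.e., $C_q A_v = C_q A_v \Pi$ under that hypothesis. On the other hand, a direct expansion of the reduced matrices gives
\[
\bar{C}_q \bar{A}_v \bar{B}_{q_0} = C_q \Pi A_{q_k} \Pi A_{q_{k-1}} \Pi \cdots \Pi A_{q_1} \Pi B_{q_0}
\]
for $v = q_1 \cdots q_k$, and analogously with $\bar{x}_0 = W x_0$ in place of $\bar{B}_{q_0}$.

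The bulk of the proof is then a straightforward ``peeling'' argument --- the dual of Claim \ref{claim1} --- which I would state and prove by induction on $j$. Precisely, I would show that for each $0 \le j \le k$,
\[
C_q \Pi A_{q_k} \Pi \cdots \Pi A_{q_{j+1}} A_{q_j} \cdots A_{q_1} = C_q A_{q_k} \cdots A_{q_1},
\]
the inductive step reducing to the key identity applied to the suffix $q_{j+1} \cdots q_k q$, which lies in ${}_*(L)$ because $q_0 v q \in {}_*(L)$ and ${}_*(L)$ is suffix-closed. Iterating eliminates every projection between $C_q$ and the rightmost position, and a final application of the same identity absorbs the terminal $\Pi$ against $B_{q_0}$ in case (ii) (using $vq \in {}_*(L)$, which follows from $q_0 v q \in {}_*(L)$) or against $x_0$ in case (i) (using $vq \in L \subseteq {}_*(L)$).

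The only care point is the bookkeeping of which suffixes of $q_0 v q$ or $vq$ must lie in ${}_*(L)$ at each step of the induction; this is immediate from the suffix-closedness of ${}_*(L)$, which is itself a direct consequence of its definition in Notation \ref{prefix:not}. The corner case $v = \varepsilon$ is handled by the single-step argument $C_q \Pi = C_q$, valid because $q \in {}_*(L)$ under either hypothesis. Beyond this bookkeeping, I do not anticipate any serious technical obstacle.
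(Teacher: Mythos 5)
Your proof is correct and is exactly the argument the paper intends: the paper omits the proof of Lemma~\ref{theo:mert2} (``by similar arguments''), and Remark~\ref{rem:lemma4} confirms that the intended route is the dual of Claim~\ref{claim1} --- your key identity $C_qA_v(I-W^{-1}W)=0$ for $vq\in{}_*(L)$ together with the suffix-closedness of ${}_*(L)$, fed into Corollary~\ref{sect:io:lemma2:col}. The only (cosmetic) slip is that your displayed induction invariant puts the already-cleaned factors on the wrong side; it should read $C_qA_{q_k}\cdots A_{q_{j+1}}\,\Pi A_{q_j}\Pi\cdots\Pi A_{q_1}\Pi B_{q_0}$, with the leftmost remaining $\Pi$ absorbed at each step, which is precisely what the key identity for the suffix $q_{j+1}\cdots q_kq$ that you invoke licenses.
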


\begin{Remark} \label{rem:lemma3}
\emph{(Lemma~\ref{theo:mert1})} Observe that from Lemma~\ref{sect:io:lemma2}, the only Markov parameters involved in the output at a final state of the NDFA are of the form $C_qA_vx_0$ where $v \in (L)_1$ and $C_qA_vB_{q_0}$ where $q_0v \in ({}_*(L))_1$. However, for the induction in the proof of Claim~\ref{claim1} to work out, it is crucial that the words $v$, $q_0v$ which are indexing the elements of the space $\SPAN \{A_vx_0, \IM (A_vB_{q_0}) \}$, must belong to prefix closed sets. Since the smallest prefix-closure of a language $K$ must be $(K)_*$, the prefix-closure of the sets $(L)_1$ and $({}_*(L))_1$ are used in the definition of $L$-reachability space; namely the sets $((L)_1)_*$ and $(({}_*(L))_1)_*$ respectively. This fact leads to matching all the outputs (or the Markov parameters involved in the output) of the original and reduced order systems generated in the course of a switching sequence from $L$, as opposed to matching only the final outputs. The latter would be sufficient for $L$-equivalence.
\end{Remark}

\begin{Remark} \label{rem:lemma4}
\emph{(Lemma~\ref{theo:mert2})} Observe that from Lemma~\ref{sect:io:lemma2}, the only Markov parameters involved in the output at a final state of the NDFA are of the form $C_qA_vx_0$ and $C_qA_vB_{q_0}$ where $vq \in {}_*(L)$. In addition, for the induction in the proof of the counterpart of Claim~\ref{claim1} to function (in the case of Lemma~\ref{theo:mert2}), it suffices that the words $vq$ which are indexing the elements of the space $\bigcap \ker (C_qA_v)$ belong to a suffix-closed set. Since ${}_*(L)$ is already suffix-closed, in this case there is no need to expand the set ${}_*(L)$ for the definition of $L$-unobservability space. Hence the reduced order system found by the use of Lemma~\ref{theo:mert2} will be an $L$-realization, but it need not be anything more. 
These will be illustrated in the last section with numerical examples. 
\end{Remark}

\section{Model reduction algorithm} \label{sect:alg}

In this section, we present an algorithm for solving Problem \ref{problem1}. The proposed algorithm relies on computing the matrices $P$ and $W$ which satisfy the conditions of Lemma \ref{theo:mert1} and Lemma \ref{theo:mert2} respectively. 
In order to compute these matrices, we will formulate alternative definitions of $L$-reachability/unobservability spaces. To this end, 
for matrices $G,H$ of suitable dimensions and for  a regular language $K  \subseteq Q^{*}$
define the sets $\mathscr{R}_{K}(G)$ and $\mathscr{O}_{K}(G)$ as follows;
\begin{align*}
& \mathscr{R}_{K}(G)=\SPAN \{ \IM (A_vG) \mid v \in K \} \\
& \mathscr{O}_{K}(H)=\bigcap_{v \in K} \ker (HA_v).
\end{align*}
Then the $L$-reachability space of $\Sigma$ can be written as
\begin{equation} \label{eq:altreach}
\mathscr{R}_L=\mathscr{R}_{((L)_1)_*}(x_0) + \sum_{q \in Q} \mathscr{R}_{({}^q(K))}(B_q),
\end{equation}
where $((L)_1)_*$ is defined as in \eqref{ReachabilityMatLanguages}, and
\begin{equation}
 \label{eq:altreach:eq1}
{}^q(K)=\{ s \in Q^* \mid \exists v_1,v_2 \in Q^*, \hat{q} \in Q, v_1 q s v_2 \hat{q}   \in L \}.
\end{equation}
In \eqref{eq:altreach}, $+$ and $\sum$ denote sums of subspaces, i.e. if $\mathcal{W},\mathcal{V}$ are two linear subspaces of $\mathbb{R}^n$, then
$\mathcal{W}+\mathcal{V}=\{ a+b \mid a \in \mathcal{W}, b \in \mathcal{W}\}$. Similarly, if $\{\mathcal{W}_i\}_{i \in I}$ is a finite family of
linear subspaces of $\mathbb{R}^n$, then $\sum_{i \in I} \mathcal{W}_i=\{ \sum_{i \in I} a_i \mid a_i \in \mathcal{W}_i, i \in I\}$. 

The $L$-unobservability space can be written as
\begin{equation} \label{eq:altobs}
\mathscr{O}_L= \bigcap_{q \in Q} \mathscr{O}_{((K)^q)}(C_q),
\end{equation}
where 
\begin{equation}
\label{eq:tobs:eq1}
(K)^q=\{ s \in Q^* \mid \exists v \in Q^*, vsq \in L \}.
\end{equation}
Note that if $L$ is regular, then ${}^q(K)$ and $(K)^q$, $q \in Q$, and $((L)_1)_*$ are also regular and NDFA's accepting ${}^q(K)$, $(K)^q$, $q \in Q$, and $((L)_1)_*$ can easily be computed from an NDFA accepting $L$. 
From \eqref{eq:altreach} and \eqref{eq:altobs} it follows that in order to compute the matrix $P$ in Lemma~\ref{theo:mert1} or $W$ in Lemma~\ref{theo:mert2}, it is enough to compute representations of the subspaces $\mathscr{R}_K(G)$ and $\mathscr{O}_K(H)$ for various choices of $K$, $G$ and $H$.
The corresponding algorithms are presented in Algorithm \ref{alg0} and Algorithm \ref{alg2}. There, we used the following notation.
\begin{Notation}[\textbf{orth}]
For a matrix $M$, $\mathbf{orth}(M)$ denotes the matrix $U$ whose columns form an orthogonal basis of $\IM(M)$.
\end{Notation}
\begin{algorithm}
\caption{
         Calculate  a matrix representation of $\mathscr{R}_K(G)$, 
         \newline
         \textbf{Inputs}: $(\{A_q\}_{q \in Q},G)$ and $\hat{\mathcal{A}}=(S,\{\rightarrow_q \}_{q \in Q},F,s_0)$ such that $L(\hat{\mathcal{A}})=K$, $F=\{s_{f_1},\cdots s_{f_k}\}$, $k \geq 1$ and 
         $\hat{\mathcal{A}}$  is co-reachable.
         \newline
         \textbf{Outputs:} $\hat{P}  \in \mathbb{R}^{n \times \hat{r}}$ such that $\hat{P}^{\mathrm{T}}\hat{P}=I_{\hat{r}}$, $\Rank(\hat{P})=\hat{r}$,
                           $\IM (\hat{P}) = \mathscr{R}_K(G)$. 
}
\label{alg0}
\begin{algorithmic}[1]
\STATE $\forall s \in S \backslash \{s_0\}: P_s:=0$.
\STATE $P_{s_0}:=\mathbf{orth}(G)$.
\label{alg0.0}
\STATE $\mathrm{flag}=0$.
\WHILE{$\mathrm{flag}=0$}
\label{alg0.1}
 \STATE $\forall s \in S: P_s^{old} := P_s$
 \FOR{$s \in S$}
  \STATE  $W_s:=P_s$
   \FOR{$q \in Q, s^{'} \in S: s^{'} \rightarrow_q s$}
    \STATE $W_s:=\begin{bmatrix} W_s, & A_qP^{old}_{s^{'}} \end{bmatrix}$
   \ENDFOR
  \STATE $P_s := \mathbf{orth}(W_s)$
 \ENDFOR
 \IF{$\forall s \in S: \Rank (P_s) = \Rank (P^{old}_s$)}
  \STATE{$\mathrm{flag}=1$.}
 \ENDIF 
\ENDWHILE
\RETURN $\hat{P}=\mathbf{orth} \left( \begin{bmatrix} P_{s_{f_1}} & \cdots & P_{s_{f_k}} \end{bmatrix} \right)$.
\end{algorithmic}
\end{algorithm}

\begin{Lemma}[Correctness of Algorithm \ref{alg0} -- Algorithm \ref{alg2}]
 Algorithm \ref{alg0} computes $\mathscr{R}_K(G)$ and Algorithm \ref{alg2} computes $\mathscr{O}_K(H)$.
\end{Lemma}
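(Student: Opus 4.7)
The plan is to prove correctness of Algorithm~\ref{alg0} in detail; Algorithm~\ref{alg2} will then follow by a dual argument in which images are replaced by kernels, sums of subspaces by intersections, and the monotone increasing chain of subspaces by a monotone decreasing chain.

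For each $s \in S$, introduce the target subspace
\[
  V_s=\SPAN\{\IM(A_v G)\mid v \in Q^*,\, s_0\rightarrow_v s\},
\]
together with its $k$-step truncation $V_s^{(k)}=\SPAN\{\IM(A_vG)\mid |v|\le k,\, s_0\rightarrow_v s\}$. Since $v \in K = L(\hat{\mathcal{A}})$ iff $s_0\rightarrow_v s_f$ for some $s_f \in F$, one has $\mathscr{R}_K(G)=\sum_{s_f \in F} V_{s_f}$. Partitioning paths by their last transition (recalling that $A_{v'q}=A_q A_{v'}$ under Notation~\ref{Notation2}) yields $V_{s_0}^{(0)}=\IM(G)$, $V_s^{(0)}=\{0\}$ for $s\ne s_0$, and for $k\ge 0$
\[
  V_s^{(k+1)}=V_s^{(k)}+\sum_{q\in Q,\, s'\rightarrow_q s} A_q V_{s'}^{(k)}.
\]

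Next I would show by induction on the \textbf{while}-loop iteration $k$ that after the $k$-th pass $\IM(P_s)=V_s^{(k)}$ for every $s$. The base case follows from the initialization in lines~1--2. For the inductive step, note that line~5 snapshots $P_s^{old}$ \emph{before} the inner for loop and the inner loop reads only $P_{s'}^{old}$, so the order in which states are processed in the outer loop is immaterial, and the combined effect of one pass is
\[
  \IM(P_s)=\IM(P_s^{old})+\sum_{q\in Q,\, s'\rightarrow_q s} A_q\,\IM(P_{s'}^{old}),
\]
which by the inductive hypothesis and the recursion above equals $V_s^{(k+1)}$; the $\mathbf{orth}$ operation preserves column span. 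Since each $\{V_s^{(k)}\}_k$ is a non-decreasing chain in $\mathbb{R}^n$, the dimensions stabilize after at most $n|S|$ iterations. Once the test in lines~13--15 succeeds, rank equality together with the inclusion $\IM(P_s^{old})\subseteq\IM(P_s)$ forces $V_s^{(k-1)}=V_s^{(k)}$ for every $s$; plugging this into the recursion shows $V_s^{(k)}=V_s^{(k+j)}$ for all $j\ge 0$, hence $V_s^{(k)}=V_s$. Therefore $\IM(\hat{P})=\IM\!\bigl([P_{s_{f_1}}\cdots P_{s_{f_k}}]\bigr)=\sum_{s\in F}V_s=\mathscr{R}_K(G)$, and the final $\mathbf{orth}$ ensures $\hat{P}^{\mathrm{T}}\hat{P}=I_{\hat{r}}$ with $\hat{r}=\Rank(\hat{P})$.

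The main obstacle is being precise about the per-state invariant in the presence of non-determinism: one must use the snapshot $P_{s'}^{old}$ inside the inner loop so that the algorithm faithfully simulates the recursion for $V_s^{(k+1)}$; otherwise contributions from transitions whose source was already updated in the same pass could be counted twice or missed, breaking the induction. Once this is settled, everything reduces to a monotone fixed-point/dimension argument. Algorithm~\ref{alg2} is handled dually: for each $s$ define $U_s$ as an intersection of kernels $\ker(HA_v)$ over the appropriate set of paths, establish the dual recursion $U_s^{(k+1)}=U_s^{(k)}\cap\bigcap_{s'\rightarrow_q s} A_q^{-1}\!\bigl(U_{s'}^{(k)}\bigr)$, apply the analogous inductive argument, and observe that a non-increasing chain of subspaces of $\mathbb{R}^n$ likewise stabilizes in at most $n|S|$ steps, yielding $\mathscr{O}_K(H)=\bigcap_{s_f\in F}U_{s_f}$.
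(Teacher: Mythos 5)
Your proof of the first statement is correct and follows essentially the same route as the paper: the paper introduces $P_{s,i}=\SPAN\{\IM(A_vG)\mid |v|\le i,\ s_0\rightarrow_v s\}$ (your $V_s^{(k)}$), proves the same one-step recursion, shows by induction that $\IM(P_s)=P_{s,i}$ after the $i$th pass, and concludes via monotonicity and the identification of $K$ with the paths into $F$. You are in fact slightly more careful than the paper on one point — verifying that when the rank test fires, rank equality plus the inclusion $\IM(P_s^{old})\subseteq\IM(P_s)$ really does certify that the fixed point has been reached — which the paper glosses over.

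One caveat on the dual part, which the paper also leaves to ``duality'': your sketch of Algorithm~\ref{alg2} mirrors the reachability case too literally. In Algorithm~\ref{alg2} the subspaces are seeded at the \emph{final} states ($W_s^{\mathrm{T}}:=\mathbf{orth}(H^{\mathrm{T}})$ for $s\in F$) and propagated \emph{backward} along transitions $s\rightarrow_q s'$, so the correct invariant is $\ker(W_s)=\bigcap\{\ker(HA_v)\mid \exists s_f\in F:\ s\rightarrow_v s_f,\ |v|\le k\}$, the recursion is $U_s^{(k+1)}=U_s^{(k)}\cap\bigcap_{q,\,s':\,s\rightarrow_q s'}A_q^{-1}\bigl(U_{s'}^{(k)}\bigr)$ (your subscript has the arrow reversed), and the answer is read off at the \emph{initial} state, $\mathscr{O}_K(H)=U_{s_0}$, consistent with the return statement $\hat{W}=W_{s_0}$ — not as an intersection over $F$ as you wrote. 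The underlying fixed-point argument still goes through verbatim once these bookkeeping details are corrected.
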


\begin{proof}
 We prove only the first statement of the lemma, the second one can be shown using duality.
Let
$P_{s,i}=\SPAN\{ \IM (A_vG) \mid v \in Q^*, |v| \le i, s_0 \rightarrow_v s\}$. 
It then follows that after the execution of Step \ref{alg0.0},
$\IM (P_s)=P_{s,0}$ for all $s \in S$. Moreover, by induction it follows that
\[ P_{s,i+1}=P_{s,i} + \sum_{q \in Q, s^{'} \in S, s^{'} \rightarrow_q s} A_qP_{s^{'},i} \]
for all $i=0,1,\ldots$ and $s \in S$. Hence, by induction it follows that
at the $i$th iteration of the loop in Step \ref{alg0.1}, $\IM (P_s)=P_{s,i}$.
Notice that
$P_{s,i} \subseteq P_{s,i+1} \subseteq \mathbb{R}^n$ and hence there exists
$k_s$ such that $P_{s,k_s}=P_{s,k}$, $k \ge k_s$, and thus $P_{s,k}=R_s$,
\[ R_s=\SPAN\{  \IM (A_vG) \mid v \in Q^{*}, s_0 \rightarrow_v s \}.\]
Let $k= \max \{k_s |  s \in S\}$.
It then follows that $P_{s,k+1}=P_{s,k}=\IM (P_s)$ for all $s \in Q$ and hence
after $k$ iterations, the loop \ref{alg0.1} will terminate. 
Moreover, in that case,  $\IM (P_{s_{f_i}})=R_{s_{f_i}}$, $i \in \{1,\cdots,k\}$. But notice that
for any $v \in Q^*$, $q \in Q$, 
$s_0 \rightarrow_v s_{f_i}$ if and only if $v \in K$, and
$s_0 \rightarrow_{qv} s_{f_i}$ if and only if $qv \in K$, $i \in \{1,\cdots,k\}$. Hence,
$\sum\limits_{s \in F}^{}R_s=\mathscr{R}_K$  and thus $\IM \left( \left[ \begin{array}{ccc} P_{s_{f_1}} & \cdots & P_{s_{f_k}} \end{array} \right] \right)=\mathscr{R}_K$.
\end{proof}

\begin{algorithm}
\caption{
         Calculate a matrix representation of $\mathscr{O}_K(H)$, 
         \newline
         \textbf{Inputs}: $(\{A_q\}_{q \in Q},H)$ and $\hat{\mathcal{A}}=(S,\{\rightarrow_q \}_{q \in Q},F,s_0)$ such that $L(\hat{\mathcal{A}})=K$, $F=\{s_{f_1},\cdots s_{f_k}\}$, $k \geq 1$ and 
         $\hat{\mathcal{A}}$  is co-reachable.
         \newline
         \textbf{Outputs:} $\hat{W}  \in \mathbb{R}^{\hat{r} \times n}$ such that $\hat{W}\hat{W}^{\mathrm{T}}=I_{\hat{r}}$, $\Rank(\hat{W})=\hat{r}$,
                           $\ker(\hat{W}) = \mathscr{O}_K(H)$. 
}
\label{alg2}
\begin{algorithmic}[1]
\STATE $\forall s \in S \backslash F: W_s:=0$.
\STATE $\forall s \in F: W_{s}^{\mathrm{T}}:=\mathbf{orth}(H^{\mathrm{T}})$.
\STATE $\mathrm{flag}=0$.
\label{alg2.0}
\WHILE{$\mathrm{flag}=0$}
\label{alg2.1}
 \STATE $\forall s \in S: W_s^{old} := W_s$
 \FOR{$s \in S$}
  \STATE  $P_s:=W_s$
   \FOR{$q \in Q, s^{'} \in S: s \rightarrow_q s^{'}$}
    \STATE $P_s:=\begin{bmatrix} P_s \\ W^{old}_{s^{'}}A_q \end{bmatrix}$
   \ENDFOR
  \STATE $W_s^{\mathrm{T}} := \mathbf{orth}(P_s^{\mathrm{T}})$
 \ENDFOR
 \IF{$\forall s \in S: \Rank (W_s) = \Rank (W^{old}_s)$}
  \STATE{$\mathrm{flag}=1$.}
 \ENDIF
\ENDWHILE
\RETURN $\hat{W}=W_{s_0}$.
\end{algorithmic}
\end{algorithm}
Notice that the computational complexities of Algorithm \ref{alg0} and Algorithm \ref{alg2}
are polynomial in $n$, even though the spaces of $\mathscr{R}_L$ (resp. $\mathscr{O}_L$) might be generated by images (resp. kernels) of exponentially many matrices.
\begin{algorithm}
\caption{Reduction for \SLSS
  \newpage 
  \textbf{Inputs:} $\Sigma=(p,m,n,Q,\{(A_q,B_q,C_q)|q \in Q\},x_0)$ and $\mathcal{A}=(S,\{\rightarrow_q \}_{q \in Q},F,s_0)$ such that $L(\mathcal{A})=L$, $F=\{s_{f_1},\cdots s_{f_k}\}$, $k \geq 1$ and $\mathcal{A}$  is co-reachable.
  \newpage  
  \textbf{Output: } $\bar{\Sigma}=(p,m,r,Q,\{(\bar{A}_q,\bar{B}_q,\bar{C}_q)|q \in Q\},\bar{x}_0)$.
}
\label{alg3}
\begin{algorithmic}[1]
\STATE Compute a co-reachable NDFA $\hat{\mathcal{A}}_r$ from $\mathcal{A}$ such that $L(\hat{\mathcal{A}}_r)=((L)_1)_*$, where $((L)_1)_*$ is as in \eqref{ReachabilityMatLanguages}.
\STATE Use Algorithm \ref{alg0} with inputs $(\{A_q\}_{q \in Q},x_0)$ and NDFA $\hat{\mathcal{A}}_r$.  Store the output $\hat{P}$ as $P_{x_0}:=\hat{P}$.
\FOR{$q \in Q$}
 \STATE  Compute a co-reachable NDFA $\hat{\mathcal{A}}_{r,q}$ from $\mathcal{A}$ such that $L(\hat{\mathcal{A}}_{r,q})={}^q(K)$, where ${}^q(K)$ is as in \eqref{eq:altreach:eq1}.
 \STATE  Use Algorithm \ref{alg0} with inputs $(\{A_q\}_{q \in Q},B_q)$ and NDFA $\hat{\mathcal{A}}_{r,q}$  Store the output $\hat{P}$ as $P_{q}:=\hat{P}$.
\ENDFOR
\STATE $P=\mathbf{orth}(\begin{bmatrix} P_{x_0} & P_1 & \cdots & P_D \end{bmatrix})$
\FOR{$q \in Q$}
 \STATE Compute a co-reachable NDFA $\hat{\mathcal{A}}_{o,q}$ from $\mathcal{A}$, such that $L(\hat{\mathcal{A}}_{o,q})=(K)^q$, 
       where $(K)^q$ is as in \eqref{eq:tobs:eq1}.
 \STATE  Use Algorithm \ref{alg2} with inputs $(\{A_q\}_{q \in Q},C_q)$ and NDFA $\hat{\mathcal{A}}_{o,q}$.  Store the output $\hat{W}$ as $W_{q}:=\hat{W}$.
\ENDFOR
\STATE $W^{\mathrm{T}}=\mathbf{orth}(\begin{bmatrix} W_{1}^{\mathrm{T}} & \cdots & W_D^{\mathrm{T}} \end{bmatrix})$
\IF{$\Rank  (P) < \Rank (W)$}
\STATE
 Let $r=\Rank (P)$, $P^{-1}$ be a left inverse of $P$ and set
\[
\bar{A}_q=P^{-1}A_qP \mbox{, } \bar{C}_q=C_qP \mbox{, } \bar{B}_q=P^{-1}B_q \mbox{, } \bar{x}_0=P^{-1}x_0.
\]
\ENDIF
\IF{$\Rank (P) \geq \Rank (W)$}
\STATE
 Let $r=\Rank (W)$ and let $W^{-1}$ be a right inverse of $W$. Set
\[
\bar{A}_q=WA_qW^{-1} \mbox{, } \bar{C}_q=C_qW^{-1} \mbox{, } \bar{B}_q=WB_q \mbox{, } \bar{x}_0=Wx_0.
\]
\ENDIF
\RETURN $\bar{\Sigma}=(p,m,r,Q,\{(\bar{A}_q,\bar{B}_q,\bar{C}_q)|q \in Q\},\bar{x}_0)$.
\end{algorithmic}
\end{algorithm}

Using Algorithm \ref{alg0} and \ref{alg2}, we can state Algorithm \ref{alg3} for solving Problem \ref{problem1}.
The matrices $P$ and $W$ computed in Algorithm \ref{alg3} satisfy the conditions of Lemma \ref{theo:mert1} and Lemma \ref{theo:mert2} respectively. 
Lemma \ref{theo:mert1} -- \ref{theo:mert2} then imply the following corollary.
\begin{Corollary}[Correctness of Algorithm \ref{alg3}]
 The LSS $\bar{\Sigma}$ returned by Algorithm \ref{alg3} is a solution of Problem \ref{problem1}, i.e. $\bar{\Sigma}$ is $L$-equivalent to $\Sigma$ and $\dim \bar{\Sigma} \le \dim \Sigma$.
\end{Corollary}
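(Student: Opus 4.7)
The plan is to verify that the matrices $P$ and $W$ assembled in Algorithm \ref{alg3} satisfy the hypotheses of Lemma \ref{theo:mert1} and Lemma \ref{theo:mert2} respectively, and then invoke the appropriate lemma depending on the branch taken.

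First I would establish that $\IM(P) = \mathscr{R}_L(\Sigma)$. By the correctness of Algorithm \ref{alg0} applied in lines 2 and 5, we have $\IM(P_{x_0}) = \mathscr{R}_{((L)_1)_*}(x_0)$ and $\IM(P_q) = \mathscr{R}_{{}^q(K)}(B_q)$ for each $q \in Q$. Since $P = \mathbf{orth}([P_{x_0}, P_1, \dots, P_D])$, we get $\IM(P) = \mathscr{R}_{((L)_1)_*}(x_0) + \sum_{q \in Q} \mathscr{R}_{{}^q(K)}(B_q)$, which equals $\mathscr{R}_L(\Sigma)$ by identity~\eqref{eq:altreach}. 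An analogous argument using the correctness of Algorithm \ref{alg2} and identity~\eqref{eq:altobs} shows that $\ker(W) = \bigcap_{q \in Q} \mathscr{O}_{(K)^q}(C_q) = \mathscr{O}_L(\Sigma)$; here one uses that $\ker\begin{bmatrix}W_1\\ \vdots \\ W_D\end{bmatrix} = \bigcap_q \ker W_q$ and that orthogonalizing the transpose does not change the kernel.

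Next I would split on the two branches of the algorithm. If $\Rank(P) < \Rank(W)$, then $P$ is a full-column-rank matrix with $\IM(P) = \mathscr{R}_L(\Sigma)$, so Lemma \ref{theo:mert1} applies directly to the reduced system defined by $P^{-1}$ and yields that $\bar{\Sigma}$ is $L$-equivalent to $\Sigma$ with $\dim \bar{\Sigma} = \Rank(P)$. Otherwise, $W$ is a full-row-rank matrix with $\ker(W) = \mathscr{O}_L(\Sigma)$, so Lemma \ref{theo:mert2} applies and yields $L$-equivalence with $\dim \bar{\Sigma} = \Rank(W)$. In either case, $\dim \bar{\Sigma} = \min\{\Rank(P), \Rank(W)\} \le n = \dim \Sigma$ since both $P$ and $W$ have at most $n$ linearly independent columns or rows. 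This gives both conclusions of the corollary.

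The main obstacle I expect is the verification of the two decompositions \eqref{eq:altreach} and \eqref{eq:altobs} used above; both rely on carefully juggling the prefix and suffix operations on $L$. For \eqref{eq:altreach}, one has to check that $v \in ((L)_1)_*$ in the $x_0$-term of Definition \ref{ReachabilityMat} matches the indexing used by $\mathscr{R}_{((L)_1)_*}(x_0)$, and that for each $q_0 \in Q$ the words $v \in Q^*$ with $q_0 v \in ({}_*((L)_1))_*$ are exactly the words $v \in {}^{q_0}(K)$ as defined in \eqref{eq:altreach:eq1}; this is a direct set-theoretic manipulation of the prefix-closure. The corresponding identity for $\mathscr{O}_L$ is simpler because ${}_*(L)$ is already suffix-closed, and $(K)^q$ is exactly the set of $v$ such that $vq \in {}_*(L)$. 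Once these language identities are in hand, the remainder of the proof is a mechanical application of the previously established lemmas.
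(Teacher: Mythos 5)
Your proposal is correct and follows essentially the same route as the paper, which simply observes that the matrices $P$ and $W$ assembled in Algorithm \ref{alg3} satisfy the hypotheses of Lemma \ref{theo:mert1} and Lemma \ref{theo:mert2} (via the decompositions \eqref{eq:altreach} and \eqref{eq:altobs} and the correctness of Algorithms \ref{alg0}--\ref{alg2}) and then invokes the appropriate lemma. Your version merely spells out the language identities and the kernel/image bookkeeping that the paper leaves implicit.
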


\section{NUMERICAL EXAMPLES}
\label{sect:exam}

In this section, the model reduction method for DTLSSs with restricted discrete dynamics will be illustrated by $2$ numerical examples. The data used for both examples and \textsc{Matlab} codes for the algorithms stated in the paper are available online from https://kom.aau.dk/\texttildelow mertb/. In the first example, 
it turns out that the rank of the $P$ matrix from Lemma \ref{theo:mert1} is less than the rank of $W$ from Lemma \ref{theo:mert2}. In the second example, the opposite is the case.  For both examples, we used the same NDFA from Figure~\ref{fig:automaton} to define the set of admissible switching sequences: the NDFA is defined as the tuple $\mathcal{A}=(S,\{\rightarrow_q \}_{q \in Q},F,s_0)$ where $S=\{s_0,s_1,s_f\}$, $\rightarrow_1=\{(s_0,s_1)\}$, $\rightarrow_2=\{(s_1,s_f)\}$, $\rightarrow_3=\{(s_f,s_0)\}$ and $F=\{s_f\}$ .

\begin{figure}[H]
\centering
\includegraphics[width=0.2\textwidth]{./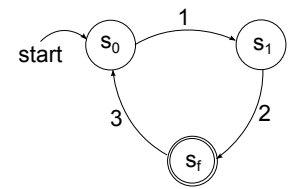}
\caption{The NDFA $\mathcal{A}$ accepting the switching sequence language for both examples.}
\label{fig:automaton}
\end{figure}

Observe that the language $L$ accepted by the NDFA $\mathcal{A}$ is the set $L=\{ 12, 12312, \dots \}$ and it can also be represented by the regular expression $L=(123)^*12$. As stated in Definition~\ref{def:NDFA}, the labels of the edges represent the discrete mode indices of the DTLSS. The parameters of the single input - single output (SISO) DTLSS $\Sigma$ of order $n=7$ with $Q=\{ 1,2,3 \}$ used for the first example are of the following form

\footnotesize
\begin{align*}
& \texttt{A}_{\texttt{1}}= \left[\texttt{zeros(7,1)}, \mbox{ } \texttt{randn(7,2)}, \mbox{ } \texttt{zeros(7,1)}, \mbox{ } \texttt{randn(7,3)}\right] \\
& \texttt{A}_{\texttt{2}}= \left[\texttt{randn(7,1)}, \mbox{ } \texttt{zeros(7,1)}, \mbox{ } \texttt{randn(7,5)}\right] \\
& \texttt{A}_{\texttt{3}}= \left[\texttt{randn(7,2)}, \mbox{ } \texttt{zeros(7,1)}, \mbox{ } \texttt{randn(7,4)}\right] \\
& \texttt{B}_{\texttt{1}}= \left[\texttt{0}, \mbox{ } \texttt{1}, \mbox{ } \texttt{0}, \mbox{ } \texttt{0}, \mbox{ } \texttt{0}, \mbox{ } \texttt{0}, \mbox{ } \texttt{0} \right]^{\texttt{T}} \\
& \texttt{B}_{\texttt{2}}= \left[\texttt{0}, \mbox{ } \texttt{0}, \mbox{ } \texttt{1}, \mbox{ } \texttt{0}, \mbox{ } \texttt{0}, \mbox{ } \texttt{0}, \mbox{ } \texttt{0} \right]^{\texttt{T}} \\
& \texttt{B}_{\texttt{3}}= \left[\texttt{0}, \mbox{ } \texttt{0}, \mbox{ } \texttt{0}, \mbox{ } \texttt{1}, \mbox{ } \texttt{0}, \mbox{ } \texttt{0}, \mbox{ } \texttt{0} \right]^{\texttt{T}} \\
& \texttt{C}_{\texttt{1}}=\texttt{randn(1,7)} \\
& \texttt{C}_{\texttt{2}}=\texttt{randn(1,7)} \\
& \texttt{C}_{\texttt{3}}=\texttt{randn(1,7)} \\
& \texttt{x}_{\texttt{0}}=\left[\texttt{1}, \mbox{ } \texttt{0}, \mbox{ } \texttt{0}, \mbox{ } \texttt{0}, \mbox{ } \texttt{0}, \mbox{ } \texttt{0}, \mbox{ } \texttt{0} \right]^{\texttt{T}}
\end{align*}
\normalsize
where \texttt{randn} and \texttt{zeros} are the \textsc{Matlab} functions which generates arrays containing random real numbers with standard normal distribution and zeros respectively. Applying Algorithm~\ref{alg3} to this DTLSS whose admissible switching sequences are generated by the NDFA shown in Figure~\ref{fig:automaton}, yields a reduced order system $\bar{\Sigma}$ of order $r=4$, whose output values are the same as the original system $\Sigma$ along the allowed switching sequences. This corresponds to an $L$-realization in the sense of Definition \ref{def:lreal} since the language $L$ of the NDFA $\mathcal{A}$ is defined as the set of all words generated by $\mathcal{A}$ starting from its initial state and ending in a final state. In this example, it turns out the algorithm makes use of Lemma~\ref{theo:mert1} and constructs the $P$ matrix. The $P \in \mathbb{R}^{n \times r}$ matrix acquired is $P= \left[ \begin{array}{c} I_4 \\ \hline \mathbf{0} \end{array} \right]$.
Note that as stated in Remark~\ref{rem:lemma3}, the resulting DTLSS is more than just an $L$-realization of $\Sigma$, its output coincides with the output of $\Sigma$ for all instances along the allowed switching sequences, rather than just the instances corresponding to the final states of the NDFA (the switching sequence generated by $\mathcal{A}$ used for simulating the examples is given in \eqref{eq:ex_switchsig}). This fact is visible from Figure~\ref{fig:output_ex1} where it can be seen that output of both systems corresponding to all instances along the switching sequence $\sigma$ of length $|\sigma|=11$ defined by
\begin{equation} \label{eq:ex_switchsig}
\sigma=12312312312
\end{equation}
coincide (the input sequence of length $11$ used in the simulation is also generated by the function \texttt{randn}). Finally, observe that the DTLSS $\Sigma$ is minimal (note that the definition of minimality for linear switched systems are made by considering \emph{all} possible switching sequences in $Q^*$ \cite{petrbako}), however for the switching sequences restricted by the NDFA $\mathcal{A}$, it turns out $3$ states are disposable. In fact, this is the main idea of the paper.


\begin{figure}[H]
\centering
\includegraphics[width=0.5\textwidth]{./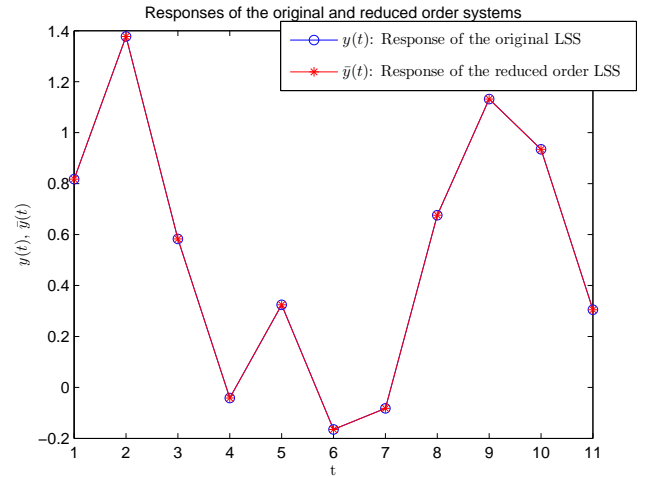}
\caption{Example $1$: The responses of the original DTLSS $\Sigma$ of order $7$ and the reduced order DTLSS $\bar{\Sigma}$ of order $4$ acquired by Algorithm~\ref{alg3} for the switching sequence in \eqref{eq:ex_switchsig}.}
\label{fig:output_ex1}
\end{figure}

One more example will be presented to illustrate the case when a representation for the $L$-unobservability space is constructed. The NDFA $\mathcal{A}$ accepting the language of allowed switching sequences is the same one used in the first example; whereas, this time the parameters of the SISO DTLSS $\Sigma$ of order $n=7$ with $Q=\{1,2,3\}$ are in the form:

\footnotesize
\begin{align*}
& \texttt{A}_{\texttt{1}}=\left[[\texttt{0}, \mbox{ } \texttt{1}, \mbox{ } \texttt{0}, \mbox{ } \texttt{0}, \mbox{ } \texttt{0}, \mbox{ } \texttt{0}, \mbox{ } \texttt{0}]; \mbox{ } \texttt{randn(6,7)}\right] \\
& \texttt{A}_{\texttt{2}}= \left[\texttt{randn(2,7)}; \mbox{ } [\texttt{0}, \mbox{ } \texttt{0}, \mbox{ } \texttt{0}, \mbox{ } \texttt{0}, \mbox{ } \texttt{0}, \mbox{ } \texttt{0}, \mbox{ } \texttt{0}]; \mbox{ }  \texttt{randn(4,7)}\right] \\
& \texttt{A}_{\texttt{3}}= \left[\texttt{randn(1,7)}; \mbox{ } [\texttt{0}, \mbox{ } \texttt{0}, \mbox{ } \texttt{1}, \mbox{ } \texttt{0}, \mbox{ } \texttt{0}, \mbox{ } \texttt{0}, \mbox{ } \texttt{0}]; \mbox{ }  \texttt{randn(5,7)} \right] \\
& \texttt{B}_{\texttt{1}}= \texttt{randn(7,1)} \\
& \texttt{B}_{\texttt{2}}= \texttt{randn(7,1)} \\
& \texttt{B}_{\texttt{3}}= \texttt{randn(7,1)} \\
& \texttt{C}_{\texttt{1}}= \texttt{randn(1,7)} \\
& \texttt{C}_{\texttt{2}}= \left[\texttt{1}, \mbox{ } \texttt{0}, \mbox{ } \texttt{0}, \mbox{ } \texttt{0}, \mbox{ } \texttt{0}, \mbox{ } \texttt{0}, \mbox{ } \texttt{0} \right] \\
& \texttt{C}_{\texttt{3}}= \texttt{randn(1,7)}
\end{align*}
\normalsize
Applying Algorithm~\ref{alg3} to this DTLSS, yields a reduced order system $\bar{\Sigma}$ of order $r=3$, whose output values are the same as the original system $\Sigma$ for the instances when the NDFA reaches a final state. Note that this corresponds precisely to an $L$-realization in the sense of Definition~\ref{def:lreal} (the last outputs of $\Sigma$ and $\bar{\Sigma}$ are the same for all the switching sequences generated by the governing NDFA, i.e., for all $\sigma \in L(\mathcal{A})$). In this example, the algorithm makes use of Lemma~\ref{theo:mert2} and constructs the $W$ matrix. The matrix $W \in \mathbb{R}^{r \times n}$ computed is $W= \left[ \begin{array}{c|c} I_3 & \mathbf{0} \end{array} \right]$.

In this example, note that the resulting DTLSS is merely an $L$-realization of $\Sigma$ and nothing more as stated in Remark~\ref{rem:lemma4}, i.e., its output coincides with the output of $\Sigma$ for only the instances corresponding to the final states of the NDFA. This fact is visible from Figure~\ref{fig:output_ex2}, where it can be seen that output corresponding to the final state $s_f$ of the NDFA coincides for $\Sigma$ and $\bar{\Sigma}$ (Observe that for all switching sequences generated by $\mathcal{A}$ ending with the label $2$, the output values of $\Sigma$ and $\bar{\Sigma}$ are the same). Again, the input sequence of length $11$ used in the simulation is generated by the function \texttt{randn}. Finally, note that the DTLSS $\Sigma$ is again minimal whereas for the switching sequences restricted by the NDFA $\mathcal{A}$, it turns out $4$ states are disposable in this case.

\begin{figure}[H]
\centering
\includegraphics[width=0.5\textwidth]{./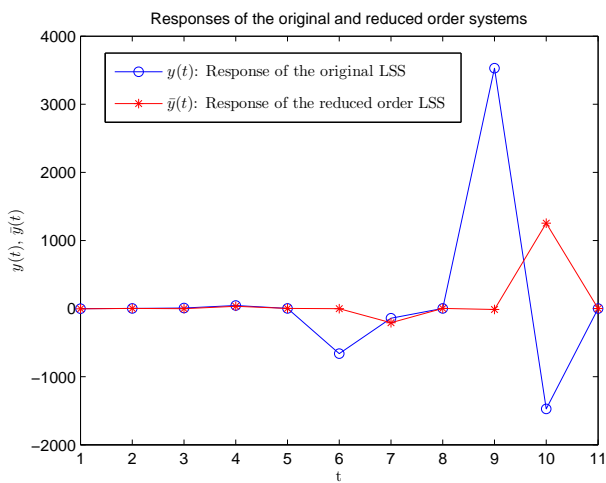}
\caption{Example $2$: The responses of the original DTLSS $\Sigma$ of order $7$ and the reduced order DTLSS $\bar{\Sigma}$ of order $3$ acquired by Algorithm~\ref{alg3} for the switching sequence in \eqref{eq:ex_switchsig}. Note that only the second and fifth element of the output sequence is equal for $\Sigma$ and $\bar{\Sigma}$ in first five elements, others look the same as a result of the scaling.}
\label{fig:output_ex2}
\end{figure}

\section{CONCLUSIONS}

A model reduction method for discrete time linear switched systems whose discrete dynamics are restricted by switching sequences comprising a regular language is presented. The method is essentially a moment matching type of model reduction method, which focuses on matching the Markov parameters of a DTLSS related to the specific switching sequences generated by a nondeterministic finite state automaton. 
Possible future research directions include expanding the method for continuous time case, and approximating the input/output behavior of the original system rather than exactly matching it, and formulating the presented algorithms in terms of bisimulation instead of input-output equivalence. 


%
%
%
%
%
%
%





\bibliographystyle{plain}
\bibliography{./root}

\begin{thebibliography}{10}

\bibitem{bastug_acc}
M.~Bastug, M.~Petreczky, R.~Wisniewski, and J.~Leth.
\newblock Model reduction by moment matching for linear switched systems.
\newblock {\em Proceedings of the American Control Conference}, to be
  published.

\bibitem{BempObs1}
A.~Bemporad, G.~Ferrari-Trecate, and M.~Morari.
\newblock Observability and controllability of piecewise affine and hybrid
  systems.
\newblock {\em IEEE Transactions on Automatic Control}, 45(10):1864--1876,
  2000.

\bibitem{French1}
A.~Birouche, J~Guilet, B.~Mourillon, and M~Basset.
\newblock Gramian based approach to model order-reduction for discrete-time
  switched linear systems.
\newblock In {\em Proc. Mediterranean Conference on Control and Automation},
  2010.

\bibitem{Chahlaoui}
Y.~Chahlaoui.
\newblock Model reduction of hybrid switched systems.
\newblock In {\em Proceeding of the 4th Conference on Trends in Applied
  Mathematics in Tunisia, Algeria and Morocco, May 4-8, Kenitra, Morocco},
  2009.

\bibitem{CalinBelta1}
Mircea Lazar Calin~Belta Ebru Aydin~Gol, Xu Chu~Ding.
\newblock Finite bisimulations for switched linear systems.
\newblock In {\em IEEE Conference on Decision and Control (CDC) 2012, Maui,
  Hawaii}, 2012.

\bibitem{PHaver}
Goran Frehse.
\newblock Phaver: algorithmic verification of hybrid systems past hytech.
\newblock {\em International Journal on Software Tools for Technology
  Transfer}, 10(3):263--279, 2008.

\bibitem{Lam1}
H.~Gao, J.~Lam, and C.~Wang.
\newblock Model simplification for switched hybrid systems.
\newblock {\em Systems \& Control Letters}, 55:1015--1021, 2006.

\bibitem{Habets1}
C.G.J.M. Habets and J.~H. van Schuppen.
\newblock Reduction of affine systems on polytopes.
\newblock In {\em International Symposium on Mathematical Theory of Networks
  and Systems}, 2002.

\bibitem{Kotsalis2}
G.~Kotsalis, A.~Megretski, and M.~A. Dahleh.
\newblock Balanced truncation for a class of stochastic jump linear systems and
  model reduction of hidden {M}arkov models.
\newblock {\em IEEE Transactions on Automatic Control}, 53(11), 2008.

\bibitem{Kotsalis1}
G.~Kotsalis and A.~Rantzer.
\newblock Balanced truncation for discrete-time {M}arkov jump linear systems.
\newblock {\em IEEE Transactions on Automatic Control}, 55(11), 2010.

\bibitem{liberzon2003}
D.~Liberzon.
\newblock {\em Switching in Systems and Control}.
\newblock Birkh{\"a}user, Boston, MA, 2003.

\bibitem{Mazzi1}
E.~Mazzi, A.S. Vincentelli, A.~Balluchi, and A.~Bicchi.
\newblock Hybrid system model reduction.
\newblock In {\em IEEE International conference on Decision and Control}, 2008.

\bibitem{6209392}
N.~Monshizadeh, H.~Trentelman, and M.~Camlibel.
\newblock A simultaneous balanced truncation approach to model reduction of
  switched linear systems.
\newblock {\em Automatic Control, IEEE Transactions on}, PP(99):1, 2012.

\bibitem{MP:Phd}
M.~Petreczky.
\newblock {\em Realization Theory of Hybrid Systems}.
\newblock PhD thesis, Vrije Universiteit, Amsterdam, 2006.

\bibitem{MP:BigArticlePartI}
M.~Petreczky.
\newblock Realization theory for linear and bilinear switched systems: formal
  power series approach - part i: realization theory of linear switched
  systems.
\newblock {\em ESAIM Control, Optimization and Caluculus of Variations},
  17:410--445, 2011.

\bibitem{petrbako}
M.~Petreczky, L.~Bako, and J.~H. van Schuppen.
\newblock Realization theory of discrete-time linear switched systems.
\newblock {\em Automatica}, 49:3337–--3344, November 2013.

\bibitem{petreczky2013}
M.~Petreczky, R.~Wisniewski, and J.~Leth.
\newblock Balanced truncation for linear switched systems.
\newblock {\em Nonlinear Analysis: Hybrid Systems}, 10:4--20, November 2013.

\bibitem{Shaker1}
H.R. Shaker and R.~Wisniewski.
\newblock Generalized gramian framework for model/controller order reduction of
  switched systems.
\newblock {\em International Journal of Systems Science, in press}, 2011.

\bibitem{Sun:Book}
Z.~Sun and S.~S. Ge.
\newblock {\em Switched linear systems : control and design}.
\newblock Springer, London, 2005.

\bibitem{TabuadaBook}
P.~Tabuada.
\newblock {\em Verification and Control of Hybrid Systems: A Symbolic
  Approach}.
\newblock Springer-Verlag, 2009.

\bibitem{HybSys}
Arjan van~der Schaft and Hans Schumacher.
\newblock {\em An Introduction to Hybrid Dynamical Systems}.
\newblock Springer-Verlag London, 2000.

\bibitem{YordanovBelta2}
B.~Yordanov and C.~Belta.
\newblock Formal analysis of discrete-time piecewise affine systems.
\newblock {\em Automatic Control, IEEE Transactions on}, 55(12):2834--2840, Dec
  2010.

\bibitem{China2}
L.~Zhang, E.~Boukas, and P.~Shi.
\newblock Mu-dependent model reduction for uncertain discrete-time switched
  linear systems with average dwell time.
\newblock {\em International Journal of Control}, 82(2):378-- 388, 2009.

\bibitem{China3}
L.~Zhang and P.~Shi.
\newblock Model reduction for switched lpv systems with average dwell time.
\newblock {\em IEEE Transactions on Automatic Control}, 53:2443--2448, 2008.

\bibitem{Zhang20082944}
L.~Zhang, P.~Shi, E.~Boukas, and C.~Wang.
\newblock Model reduction for uncertain switched linear discrete-time systems.
\newblock {\em Automatica}, 44(11):2944 -- 2949, 2008.

\end{thebibliography}

\addtolength{\textheight}{-12cm}   

\end{document}